\documentclass[12pt]{iopart}

%Uncomment next line if AMS fonts required
\usepackage{epsfig,amssymb,iopams,bbm,amsthm,verbatim}

\newcommand{\aff}[1]{\boldsymbol{\mathcal{#1}}}
\newcommand{\vecttt}[3]{\left( \! \begin{array}{c} #1 \\ #2 \\ #3  \end{array} \! \right)}
\newcommand{\vectt}[2]{\left( \! \begin{array}{c} #1 \\ #2  \end{array} \! \right)}

\newcommand{\id}{\mathbbm{1}}

\newcommand{\asym}[1]{\mathcal{S}\left( #1 \right)}
\newcommand{\asyms}[1]{\mathcal{S}( #1)}
\newcommand{\abs}[1]{\left\vert #1 \right\vert}

\newtheorem{lemma}{Lemma}
\newtheorem{corollary}{Corollary}

\newtheorem{proposition}{Proposition}
\newtheorem{theorem}{Theorem}

\begin{document}

\today

\title[Multi-terminal Thermoelectric Transport in a Magnetic
Field]{Multi-terminal Thermoelectric Transport in a Magnetic
Field:\\ Bounds on Onsager Coefficients and Efficiency}

\author{Kay Brandner and Udo Seifert}

\address{II. Institut f\"ur Theoretische Physik, Universit\"at Stuttgart, 70550 Stuttgart, Germany}

\begin{abstract}
Thermoelectric transport involving an arbitrary number of terminals is 
discussed in the presence of a magnetic field breaking time-reversal 
symmetry within the linear response regime using the Landauer-B\"uttiker
formalism. We derive a universal bound on the Onsager coefficients that 
depends only on the number of terminals. This bound implies bounds on 
the efficiency and on efficiency at maximum power for heat engines and 
refrigerators. For isothermal engines pumping particles and for absorption
refrigerators these bounds become independent even of the number of 
terminals. On a technical level, these results follow from an original 
algebraic analysis of the asymmetry index of doubly substochastic matrices
and their Schur complements. 
\end{abstract}

%Uncomment for PACS numbers title message
\pacs{72.15.Jf, 05.70.Ln}
% Keywords required only for MST, PB, PMB, PM, JOA, JOB? 
%\vspace{2pc}
%\noindent{\it Keywords}: Article preparation, IOP journals
% Uncomment for Submitted to journal title message
\submitto{\NJP}
% Comment out if separate title page not required
\maketitle

\section{Introduction}

Thermoelectric devices use a coupling between heat and particle currents 
driven by local gradients in temperature and chemical potential to
generate electrical power or for cooling \cite{Dresselhaus2007,Snyder2008,Bell2008,Vineis2010}. Since they 
work without any moving parts, such machines have a lot 
of advantages compared to their cyclic counterparts, which rely on the 
periodic compression and expansion of a certain working fluid 
\cite{Humphrey2005a}. However, so far their notoriously modest efficiency 
prevents a wide-ranging applicability. Although it has been shown that
proper energy filtering leads to highly efficient 
thermoelectric heat engines \cite{Mahan1996},
which, in principle, may even reach Carnot efficiency
\cite{Humphrey2002, Humphrey2005}, so far no competitive devices 
coming even close to this limit are available. Consequently,
the challenge of finding better thermoelectric materials has attracted
a great amount of scientific interest during the last decades.\\

Recently, Benenti \etal discovered a new option to enhance the 
performance of thermoelectric engines \cite{Benenti2011}.
Their rather general analysis
within the phenomenological framework of linear irreversible
thermodynamics reveals that a magnetic field, which breaks time 
reversal symmetry, could enhance thermoelectric efficiency
significantly. In principle, it even seems to be possible to get
completely reversible transport, i.e., devices that work at Carnot
efficiency while delivering finite power output. This spectacular 
observation prompts the question, whether this option can be realized
in specific microscopic models.\\

An elementary and well established framework for the description of 
thermoelectric transport on a microscopic level is provided by the 
scattering approach  originally pioneered by Landauer \cite{Landuer1957}. 
The basic idea behind this method is to connect two electronic reservoirs (terminals) of different temperature and chemical potential via perfect, infinitely long leads to a central scattering region. By assuming non
interacting electrons, which are transferred coherently between the
terminals, it is possible to express the linear transport coefficients
in terms of the scattering matrix that describes the motion of a single 
electron of energy $E$ through the central region. Thus, the 
macroscopic transport process can be traced back to the microscopic 
dynamics of the electrons. This formalism can easily be extended to
an arbitrary number of terminals \cite{Sivan1986, Butcher1990}.\\

Within a purely coherent two-terminal set-up,
current conservation requires a symmetric scattering matrix and hence
a symmetric matrix of kinetic coefficients,
even in the presence of a magnetic field \cite{Buttiker1988a}. Therefore,
without inelastic scattering events the broken time reversal symmetry
is not visible on the macroscopic scale. An elegant way to simulate
inelastic scattering within an inherently conservative system goes back to B\"uttiker \cite{Buttiker1988}. He proposed to attach additional, 
so-called probe terminals to the scattering region, whose temperature
and chemical potential are adjusted in such a way that they do not exchange 
any net quantities with the remaining terminals but only induce
phase-breaking.\\

The arguably most simple case is to include only one probe terminal, which
leads to a three-terminal model. Saito \etal \cite{Saito2011} pointed out 
that such a minimal set-up is sufficient to obtain a non-symmetric matrix of kinetic coefficients. However, we have shown in a preceding work on the
three-terminal system \cite{Brandner2013} that current conservation puts 
a much stronger bound on the Onsager coefficients than the bare second law.
It turned out that this new bound constrains the maximum efficiency of 
the model as a thermoelectric heat engine to be significantly smaller than 
the Carnot value as soon as the Onsager matrix becomes non-symmetric. 
Moreover, Balachandran \etal \cite{Balachandran2013} demonstrated by 
extensive numerical efforts that our bound is tight.\\

The strong bounds on Onsager coefficients and efficiency obtained within
the three-terminal set-up raise the question whether they persist if more
terminals are included. This problem will be addressed in this paper. 
We will derive a universal bound on kinetic coefficients that depends only 
on the number of terminals and gets weaker as this number increases. 
Only in the limit of infinitely many terminals, this bound approaches 
the well-known one following from the positivity of entropy production.
By specializing these results to thermoelectric transport between two 
\emph{real} terminals with the other $n-2$ acting as probe terminals, we obtain
bounds on the efficiency and the efficiency at maximum power for different
variants of thermoelectric devices like heat engines and cooling devices.\\

Our results follow from analyzing the matrix of kinetic coefficients in the 
$n$-terminal set-up and its subsequent specializations to two real and 
$n-2$ probe terminals. On a technical level, we introduce an asymmetry index
for a positive semi-definite matrix and compute it for the class of matrices
characteristic for the scattering approach. These calculations involve a
fair amount of original matrix algebra for doubly substochastic matrices 
and their Schur complements, which we develop in an extended and self-contained
mathematical appendix.\\

The main part of the paper is organized as follows. In section 2, we 
introduce the multi-terminal model and recall the expressions for its kinetic
coefficients. In section 3, we derive the new bounds on these coefficients.
In section 4, we show how these bounds imply bounds on the efficiency and the
efficiency at maximum power for heat engines, for refrigerators, for iso-thermal
engines and for absorption refrigerators. In contrast to the former two 
classes, the latter two involve only one type of affinities, namely chemical 
potential or temperature differences, respectively, which implies even stronger
bounds. We conclude in section 5.

\newpage

\section{The Multi-terminal Model}

\begin{figure}[htb]
\centering
\epsfig{file=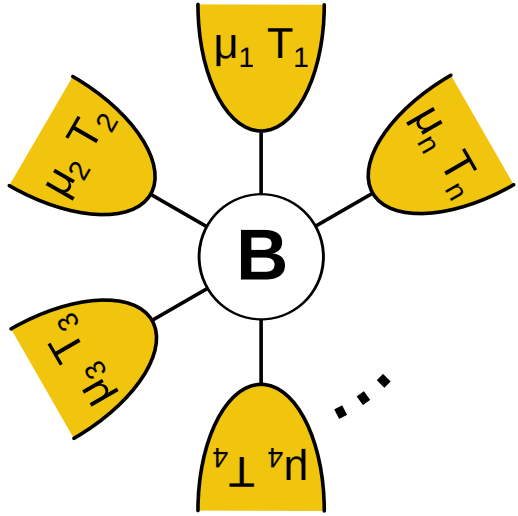, scale=1.7}
\caption{Sketch of the multi-terminal model for 
thermoelectric transport
\label{Fig_Multi_Terminal_Model}}
\end{figure}

We consider the set-up schematically shown in figure 
\ref{Fig_Multi_Terminal_Model}. A central scattering region
equipped with a constant magnetic field $\mathbf{B}$ is 
connected to $n$ independent electronic reservoirs
(terminals) of respective temperature $T_1, \dots, T_n$ 
and chemical potential $\mu_1, \dots \mu_n$. We assume 
non interacting electrons, which are transferred coherently
between the terminals without any inelastic scattering.
In order to describe the resulting transport
process within the framework of linear irreversible 
thermodynamics, we fix the reference temperature 
$T \equiv T_1$ and chemical potential $\mu \equiv \mu_1$,
and define the affinities 
\begin{eqnarray}\label{MT Affinities}
\mathcal{F}^{\rho}_{\alpha} \equiv \frac{\mu_{\alpha}-\mu}{T} 
\equiv \frac{\Delta\mu_{\alpha}}{T} \qquad {{ \rm and }} \qquad
\mathcal{F}^q_{\alpha} \equiv \frac{T_{\alpha} -T}{T^2} \equiv 
\frac{\Delta T_{\alpha}}{T^2}\\ \left( \alpha = 2, \dots n
\right).
\nonumber
\end{eqnarray}
By $J^{\rho}_{\alpha}$ and $J^q_{\alpha}$ we denote the charge
and the heat current flowing out of the reservoir $\alpha$,
respectively.
Within the linear response regime, which is valid as long as
the temperature and chemical potential differences 
$\Delta T_{\alpha}$ and $\Delta \mu_{\alpha}$ are small compared
to the respective reference values, the currents and affinities 
are connected via the phenomenological equations
\cite{Callen1985}
\begin{equation}\label{Phenomenological Equations}
\mathbf{J} = \mathbb{L}(\mathbf{B}) \aff{F}.
\end{equation}
Here, we introduced the current vector 
\begin{equation}
\mathbf{J}
\equiv \vecttt{\mathbf{J}_2}{\vdots}{\mathbf{J}_n}
\qquad {{ \rm and}} \; {{\rm the }} \; {{\rm affinity }} \;
{{ \rm vector}} \qquad
\aff{F}=\vecttt{\aff{F}_2}{\vdots}{\aff{F}_n}
\end{equation}
with the respective subunits
\begin{equation}
\mathbf{J}_{\alpha} \equiv \vectt{J^{\rho}_{\alpha}}
{J^q_{\alpha}} 
\qquad {{\rm and}} \quad
\aff{F}_{\alpha} \equiv 
\vectt{\mathcal{F}_{\alpha}^{\rho}}{\mathcal{F}_{\alpha}^q}.
\end{equation}
Analogously, we divide the matrix of kinetic coefficients 
\begin{equation}\label{Onsager Matrix}
\mathbb{L}(\mathbf{B}) \equiv 
\left( \! \begin{array}{ccc}
\mathbb{L}_{22}(\mathbf{B}) & \cdots & 
\mathbb{L}_{2n}(\mathbf{B})\\
\vdots           & \ddots & \vdots\\
\mathbb{L}_{n2}(\mathbf{B})  & \cdots & 
\mathbb{L}_{nn}(\mathbf{B})
\end{array} \! \right) \in \mathbb{R}^{2(n-1)\times 2(n-1)}
\end{equation}
into the $2 \times 2$ blocks $\mathbb{L}_{\alpha \beta} \in 
\mathbb{R}^{2 \times 2}$ ($\alpha,\beta = 2,\dots, n$), which can
be calculated explicitly. By making use of the 
multi-terminal Landauer formula  \cite{Sivan1986, Butcher1990},
we get the expression
\begin{equation}\label{Landauer Buttiker}
\mathbb{L} _{\alpha\beta}(\mathbf{B})
 = \frac{Te^2}{h} \int_{- \infty}^{\infty} dE \;
F(E) \left( \! \begin{array}{cc}
1 & \frac{E-\mu}{e} \\
\frac{E-\mu}{e} & \left( \frac{E-\mu}{e} \right)^2
\end{array} \! \right)
\left( \delta_{\alpha\beta} - T_{\alpha\beta}(E, \mathbf{B} )
\right),
\end{equation}
where $h$ denotes Planck's constant, $e$ the electronic
unit charge,
\begin{equation}
F(E) \equiv \left[ 4 k_B T \cosh^2 \left(\frac{E-\mu}{2 k_BT}
\right) \right]^{-1}
\end{equation}
the negative derivative of the Fermi function and $k_B$ 
Boltzmann's constant.\\

The expression (\ref{Landauer Buttiker}) shows that the
transport properties of the model are completely determined by
the transition probabilities $T_{\alpha\beta}(E,\mathbf{B})$, 
which obey two important relations. First, current conservation
requires the sum rule 
\begin{equation}\label{Sum Rule for T}
\sum_{\alpha=1}^n T_{\alpha\beta}(E, \mathbf{B}) = 
\sum_{\beta=1}^n T_{\alpha\beta}(E, \mathbf{B}) =1,
\end{equation}
i.e., the transition matrix 
\begin{equation}\label{Transition Matrix}
\mathbb{T}(E,\mathbf{B}) \equiv \left( \!
\begin{array}{ccc}
T_{11}(E, \mathbf{B}) & \cdots & T_{1n}(E, \mathbf{B})\\
\vdots                & \ddots & \vdots\\
T_{n1}(E, \mathbf{B}) & \cdots & T_{nn}(E, \mathbf{B})
\end{array} \! \right) \in \mathbb{R}^{n \times n}
\end{equation} 
is doubly stochastic for any $E \in\mathbb{R}$ and $\mathbf{B}\in
\mathbb{R}^3$.
Second, due to time reversal symmetry, the
$T_{\alpha\beta}(E, \mathbf{B})$ have to posses the symmetry
\begin{equation}\label{Time Reversal Symmetry for T}
T_{\alpha\beta}(E, \mathbf{B}) = T_{\beta\alpha}(E, -\mathbf{B}).
\end{equation}
Notably, for a fixed magnetic field $\mathbf{B}$, the transition 
matrix $\mathbb{T}(E,\mathbf{B})$ does not necessarily have to be
symmetric. This observation will be crucial for the subsequent 
considerations.\\

For later purpose, we note that,
by combining (\ref{Onsager Matrix}) and (\ref{Landauer Buttiker}),
$\mathbb{L}(\mathbf{B})$ can be expressed as an integral over
tensor products given by 
\begin{equation}\label{MT Landauer Buttiker Tensor Product}
\mathbb{L}(\mathbf{B}) = \frac{Te^2}{h}\int_{-\infty}^{\infty}
dE \; F(E) \left( \id -\bar{\mathbb{T}}(E,\mathbf{B})\right)
\otimes \left( \! \begin{array}{cc}
1 & \frac{E-\mu}{e} \\
\frac{E-\mu}{e} & \left( \frac{E-\mu}{e} \right)^2
\end{array} \! \right).
\end{equation}
Here, $\id$ denotes the identity matrix and  $\bar{\mathbb{T}}(E,
\mathbf{B})$ arises from $\mathbb{T}(E,\mathbf{B})$ by deleting 
the first row and column. Consequently, the matrix $\bar{\mathbb{
T}}(E,\mathbf{B})$ must be \emph{doubly substochastic}, which means
that all entries of $\bar{\mathbb{T}}(E,\mathbf{B})$ are 
non-negative and any row and column sums up to a value \emph{not
greater} than $1$.

\section{Bounds on the Kinetic Coefficients}

\subsection{Phenomenological Constraints}

The phenomenological framework of linear irreversible 
thermodynamics provides two fundamental constraints on the
matrix of kinetic coefficients $\mathbb{L}(\mathbf{B})$.
First, since the entropy  production accompanying the 
transport process descibed by
(\ref{Phenomenological Equations}) reads \cite{Callen1985}
\begin{equation}\label{Entropy Production}
\dot{S} = \aff{F}^t \mathbf{J} 
= \aff{F}^t\mathbb{L}(\mathbf{B}) \aff{F},
\end{equation}
the second law requires $\mathbb{L}(\mathbf{B})$ to be 
positive semi-definite.
Second, Onsager's reciprocal relations impose the symmetry 
\begin{equation}\label{Time Reversal Symmetry for L}
\mathbb{L}^t(\mathbf{B}) = \mathbb{L}(- \mathbf{B}). 
\end{equation}
Apart from these constraints, no further general relations
restricting the elements of $\mathbb{L}(\mathbf{B})$ at fixed
magnetic field $\mathbf{B}$ are known. We will now 
demonstrate that such a lack of constraints leads to profound
consequences for the thermodynamical properties of this model.
To this end, we split the current vector 
$\mathbf{J}$ into an irreversible and a reversible part 
given by 
\begin{equation}
\mathbf{J}^{{{\rm irr}}} \equiv 
\frac{\mathbb{L}(\mathbf{B}) + \mathbb{L}^t(\mathbf{B})}{2}
\aff{F}
\qquad {{\rm and }} \qquad 
\mathbf{J}^{{{\rm rev}}} \equiv 
\frac{\mathbb{L}(\mathbf{B}) - \mathbb{L}^t(\mathbf{B})}{2}
\aff{F}
\end{equation}
respectively. The reversible part vanishes for 
$\mathbf{B}=\boldsymbol{0}$ by virtue of the reciprocal 
relations (\ref{Time Reversal Symmetry for L}). However, 
in situations with $\mathbf{B} \neq \boldsymbol{0}$ it can 
become arbitrarily large without contributing to the 
entropy production (\ref{Entropy Production}). In 
principle, it would be even possible to have $\dot{S} =0$ 
and $\mathbf{J}^{{{\rm rev}}} \neq \boldsymbol{0}$ 
simultaneously, i.e., completely reversible transport, 
suggesting \emph{inter alia} the opportunity for a 
thermoelectric heat engine operating at Carnot efficiency
with finite power output \cite{Benenti2011}. This observation
raises the question, whether there might be 
stronger relations between the kinetic coefficients
going beyond the well known reciprocal relations 
(\ref{Time Reversal Symmetry for L}). In the next section, starting
from the microscopic representation (\ref{Landauer Buttiker}),
we derive bounds on the kinetic coefficients, which prevent 
this option of Carnot efficiency with finite power.

\subsection{Bounds following from Current Conservation}

These bounds can be derived by first quantifying the asymmetry
of the Onsager matrix  $\mathbb{L}(\mathbf{B})$.
For an arbitrary positive semi-definite matrix $\mathbb{A} \in
\mathbb{R}^{m \times m}$ we define an asymmetry index by 
\begin{equation}\label{Asymmetry Index}
\mathcal{S}(\mathbb{A}) \equiv 
\min \left\{ \left. s \in \mathbb{R} \right\vert 
\forall \mathbf{z} \in \mathbb{C}^{m} \; \; \;
\mathbf{z}^{\dagger}
\left(s \left( \mathbb{A} + \mathbb{A}^t \right)
+ i \left( \mathbb{A}
-  \mathbb{A}^t \right) \right)\mathbf{z} \geq 0
\right\}.
\end{equation}
Some of the basic properties of this asymmetry index are outlined 
in \ref{Apx AI Basic properties}. We note that a quite similar quantity
was introduced by Crouzeix and Gutan \cite{Crouzeix2003} in another
context.\\

We will now proceed in two steps. First, we show that the
asymmetry index of the matrix of kinetic coefficients 
$\mathbb{L}(\mathbf{B})$ and all its principal submatrices
is bounded from above for any finite number of terminals $n$.
Second, we will derive therefrom a set of new bounds 
on the elements of $\mathbb{L}(\mathbf{B})$, which go beyond the
second law. We note that from now on we notationally suppress 
the dependence of any quantity on the magnetic field in
order to keep the notation slim. \\

For the first step, we define the quadratic form
\begin{equation}\label{Asymmetry Form}
Q(\mathbf{z},s) \equiv \mathbf{z}^{\dagger} 
\left(s\left(\mathbb{L}_A + \mathbb{L}^t_A \right)
+ i\left( \mathbb{L}_A - \mathbb{L}^t_A \right)
\right)\mathbf{z}
\end{equation}
for any $\mathbf{z} \in \mathbb{C}^{2m}$ and any 
$s\in\mathbb{R}$. Here, $A\subset \left\{2,\dots, n\right\}$
denotes a set of $m\leq n-1$ integers. The matrix $\mathbb{L}_A$ 
arises from $\mathbb{L}$ by taking all blocks $\mathbb{L}_{
\alpha\beta}$ with column and row index in $A$, i.e., $\mathbb{L
}_A$ is a principal submatrix of $\mathbb{L}$, which 
preserves the $2 \times 2$ block structure shown in
(\ref{Onsager Matrix}). Comparing (\ref{Asymmetry Form}) with the
definition (\ref{Asymmetry Index}) reveals that the minimum $s$
for which $Q(\mathbf{z},s)$ is positive semi-definite equals 
the asymmetry index of $\mathbb{L}_A$. 
Next, by recalling (\ref{MT Landauer Buttiker Tensor Product}) we 
rewrite the matrix $\mathbb{L}_A$ in
the rather compact form
\begin{equation}\label{Tensor Representation OM}
\mathbb{L}_A
= \frac{T e^2}{h} \int_{-\infty}^{\infty} dE \; F(E)
\left(\mathbbm{1} - \bar{\mathbb{T}}_A
(E) \right)\otimes \left( \! \begin{array}{cc}
1 & \frac{E-\mu}{e} \\
\frac{E-\mu}{e} & \left( \frac{E-\mu}{e} \right)^2
\end{array} \! \right),
\end{equation}
where $\bar{\mathbb{T}}_A (E) \in \mathbb{R}^{m\times m}$ is
obtained from $\bar{\mathbb{T}}(E)$ by taking the rows and
columns indexed by the set $A$.
Decomposing the vector $\mathbf{z}$ as
\begin{equation}\label{Decomposition of z}
\mathbf{z} \equiv \mathbf{z}_1 \otimes \vectt{1}{0}
+ \mathbf{z}_2 \otimes \vectt{0}{1} 
\qquad {{\rm with}} \qquad
\mathbf{z}_1, \mathbf{z}_2 \in \mathbb{C}^m
\end{equation}
and inserting (\ref{Tensor Representation OM})
and (\ref{Decomposition of z}) into (\ref{Asymmetry Form}) 
yields 
\begin{equation}\label{Asymmetry Form K}
Q(\mathbf{z},s) = \frac{T e^2}{h}
\int_{-\infty}^{\infty} dE \; F(E)
\mathbf{y}^{\dagger}(E)\mathbb{K}(E,s)
\mathbf{y}(E).
\end{equation}
Here we introduced the vector 
\begin{equation}
\mathbf{y}(E) \equiv \mathbf{z}_1  + \frac{E-\mu}{e}
\mathbf{z}_2
\end{equation}
and the Hermitian matrix 
\begin{eqnarray}\label{Auxiliary Matrix}
\mathbb{K}_A(E,s) \equiv s \left( 2  \cdot \mathbbm{1} 
- \bar{\mathbb{T}}_A(E) 
- \bar{\mathbb{T}}^t_A(E) \right)
- i \left( \bar{\mathbb{T}}_A(E) 
- \bar{\mathbb{T}}^t_A(E) \right)
\in \mathbb{C}^{m\times m}, \nonumber\\
\end{eqnarray}
which is positive semi-definite for any 
\begin{equation}
s \geq \asym{ \id - \bar{\mathbb{T}}_A(E)}.
\end{equation}
However, since $\bar{\mathbb{T}}(E)$ is doubly stochastic for any $E$,
the matrix $\bar{\mathbb{T}}_A(E)$ must have the same property
and it follows from Corollary 
\ref{Apx AI Doubly Substochastic Matrices} proven in 
\ref{Apx Bound on AI for special classes of matrices}
\begin{equation}
\asym{ \id - \bar{\mathbb{T}}_A(E) } \leq
\cot \left( \frac{ \pi}{m+1} \right).
\end{equation}
Hence, independently of $E$, $\mathbb{K}_A(E,s)$ is positive 
semi-definite for any 
\begin{equation}\label{Bound on asymmetry index DS matrices}
s \geq \cot \left( \frac{\pi}{m+1} \right). 
\end{equation}
Finally, we can infer from (\ref{Asymmetry Form K}) that 
$Q(\mathbf{z},s)$ is positive semi-definite for any
$s$, which obeys (\ref{Bound on asymmetry index DS matrices}).
Consequently, with (\ref{Asymmetry Form}), we have the desired
bound on the asymmetry index of $\mathbb{L}_A$ as 
\begin{equation}\label{Asymmetry Bound}
\asym{\mathbb{L}_A}\leq \cot \left( \frac{\pi}{m+1} \right).
\end{equation} 
This bound, which ultimately follows from current conservation,
constitutes our first main result.\\

\begin{figure}[t]
\centering 
\epsfig{file=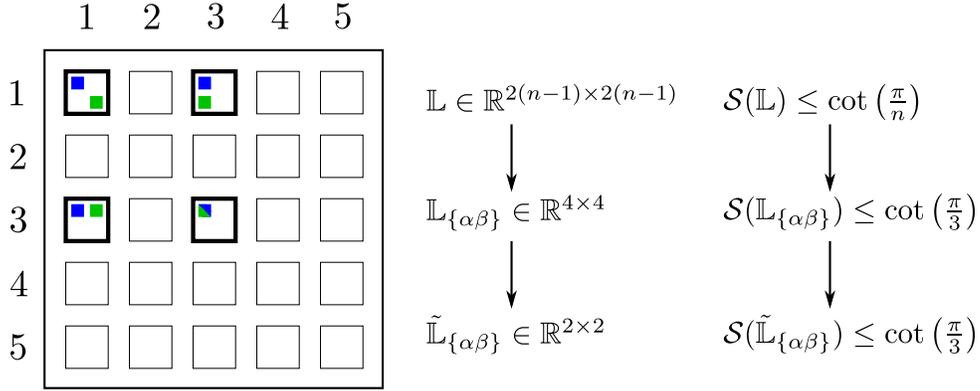, scale=2}
\caption{Schematic illustration of the reduction from $\mathbb{L}$ to 
$\tilde{\mathbb{L}}_{\alpha\beta}$. The big square represents 
$\mathbb{L}$ for the case $n=6$, the smaller ones correspond to the
$2\times 2$ blocks 
introduced in (\ref{Onsager Matrix}). By taking the bold framed 
squares, the $4\times 4$
matrix $\mathbb{L}_{\{\alpha\beta\}}$ is obtained for the case
$\alpha=1$ and $\beta=3$. The filled squares represent the elements of
the $2\times 2$ matrix $\tilde{\mathbb{L}}_{\{\alpha\beta\}}$ introduced in 
(\ref{B ij-Matrix}) for $(i,j)=(1,1)$ (blue) and 
$(i,j)=(2,1)$ (green). \label{Fig Matrix Reduction}}
\end{figure}

We will now demonstrate that (\ref{Asymmetry Bound}) puts
indeed strong bounds on the kinetic coefficients. To this end,
we extract a $2\times 2$ principal submatrix from $\mathbb{L}$ 
by a two-step procedure, which is schematically summarized in 
figure \ref{Fig Matrix Reduction}. In the first step, we consider the
$4\times 4$ principal submatrix of $\mathbb{L}$ given by 
\begin{equation}\label{LA2 Matrix}
{\mathbb{L}}_{\{\alpha,\beta\}} \equiv \left( \! 
\begin{array}{cc}
\mathbb{L}_{\alpha\alpha} & \mathbb{L}_{\alpha\beta}\\
\mathbb{L}_{\beta\alpha} & \mathbb{L}_{\beta\beta} 
\end{array} \! \right),
\end{equation}
which arises from $\mathbb{L}$ by taking only the blocks with row and
column index equal to $\alpha$ or $\beta$. From 
(\ref{Asymmetry Bound}) we immediately get with $m=2$
\begin{equation}\label{AI LA2 Matrix}
\asym{\mathbb{L}_{\{\alpha,\beta\}}}\leq\cot\left( \frac{\pi}{3}
\right)= \frac{1}{\sqrt{3}}. 
\end{equation}
Next, from (\ref{LA2 Matrix}), we take a $2\times 2$ principal
submatrix 
\begin{equation}\label{B ij-Matrix}
\tilde{\mathbb{L}}_{\{\alpha,\beta\}}\equiv
\left( \! \begin{array}{cc}
  \left(\mathbb{L}_{\alpha\alpha}\right)_{ii} 
& \left(\mathbb{L}_{\alpha\beta}\right)_{ij}\\
  \left(\mathbb{L}_{\beta\alpha}\right)_{ji}
& \left(\mathbb{L}_{\beta\beta}\right)_{jj}
\end{array} \! \right)
\equiv 
\left( \! \begin{array}{cc}
  L_{11} & L_{12}\\
  L_{21} & L_{22}
\end{array} \! \right),
\end{equation}
where $\left(\mathbb{L}_{\alpha\beta}\right)_{ij}$ with $i,j=1,2$ 
denotes the $(i,j)$-entry of the block matrix $\mathbb{L}_{
\alpha\beta}$. By virtue of Proposition
\ref{Apx Dominance of Principal Submatrices AI} proven in
\ref{Apx Bound on AI for special classes of matrices}, the inequality
(\ref{AI LA2 Matrix}) implies 
\begin{equation}
\asym{\tilde{\mathbb{L}}_{\{\alpha,\beta\}}}
\leq \frac{1}{\sqrt{3}},
\end{equation}
which is equivalent to requiring the Hermitian matrix
\begin{equation}
\tilde{\mathbb{K}}_{ \{\alpha,\beta \} }  \equiv 
\frac{1}{\sqrt{3}} \left(
\tilde{\mathbb{L}}_{\{\alpha,\beta\}} + 
\tilde{\mathbb{L}}_{\{\alpha,\beta\}}^t \right)
+i\left(  \tilde{\mathbb{L}}_{\{\alpha,\beta\}}
 -  \tilde{\mathbb{L}}_{\{\alpha,\beta\}}^t 
\right)
= \left( \begin{array}{cc}
K_{11} & K_{12}\\
K_{12}^{\ast} & K_{22}
\end{array} \right)
\end{equation}
to be positive semi-definite. Since the diagonal entries of
$\tilde{\mathbb{K}}_{\{\alpha,\beta \}}$ are obviously
non-negative, this condition reduces to
$ {{\rm Det}} \tilde{\mathbb{K}}_{\{ \alpha,\beta\} }
= K_{11}K_{22} - \abs{K_{12}}^2 \geq 0.$ 
Finally, expressing the $K_{ij}$ again in terms of the $L_{ij}$ 
yields the new constraint
\begin{equation}\label{New Bound 2}
4 L_{11}L_{22} - (L_{12} +L_{21})^2 \geq
3\left(L_{12}- L_{21} \right)^2.
\end{equation}
This bound that holds for the elements of any $2\times 2$ principal
submatrix of the 
full matrix of kinetic coefficients $\mathbb{L}$, irrespective of
the number $n$ of terminals is our second main result. 
Compared to relation (\ref{New Bound 2}), the second law only
requires
$\tilde{\mathbb{L}}_{\{\alpha,\beta\}}$ to be positive 
semi-definite, which is equivalent to $L_{11}, \; L_{22} \geq 0$
and the weaker constraint
\begin{equation}\label{Second Law 2}
4 L_{11}L_{22} - \left(L_{12} +L_{21} \right)^2 \geq 0.
\end{equation}
Note that the reciprocal relations 
(\ref{Time Reversal Symmetry for L})
do not lead to any further relations between the kinetic
coefficients contained in $\tilde{\mathbb{L}}_{\{\alpha,\beta\}}$
for a fixed magnetic field $\mathbf{B}$.

At this point, we emphasize that the procedure shown
here for $2\times 2$ principal submatrices of $\mathbb{L}$
could be easily extended to larger principal submatrices. 
The result would be a whole hierachy of constraints involving
more and more kinetic coefficients.
However, (\ref{New Bound 2}) is the strongest bound following
from (\ref{Asymmetry Bound}), which can expressed in terms 
of only four of these coefficients.

\section{Bounds on Efficiencies}

In this section, we explore the consequences of the bound 
(\ref{Asymmetry Bound}) on the performance of various 
thermoelectric devices.

\subsection{Heat engine}

A thermoelectric heat engine uses heat from a hot reservoir 
as input and generates power output by driving a particle current 
against an external field or a gradient of chemical potential
\cite{Humphrey2005a}. Such an engine can be realized within
the multi-terminal model by considering the terminals $3,\dots, n$
as pure probe terminals, which mimic inelastic scattering events 
while not contributing to the actual transport process. This constraint
reads
\begin{equation}\label{Probe Terminal Constraints}
\vecttt{0}{\vdots}{0}= \vecttt{\mathbf{J}_3}{\vdots}{\mathbf{J}_n}
= \left(\! \begin{array}{ccc}
\mathbb{L}_{32}  & \cdots & \mathbb{L}_{3n}\\
\vdots           & \ddots & \vdots\\
\mathbb{L}_{n2}  & \cdots & \mathbb{L}_{nn}
\end{array} \! \right) 
\vecttt{\aff{F}_2}{\vdots}{\aff{F}_n}.
\end{equation}
By assuming the matrix
\begin{equation}\label{HE Inverted Matrix}
\mathbb{L}_{\{3,\dots,n\}} \equiv \left(\! \begin{array}{ccc}
\mathbb{L}_{33}  & \cdots & \mathbb{L}_{3n}\\
\vdots           & \ddots & \vdots\\
\mathbb{L}_{n3}  & \cdots & \mathbb{L}_{nn}
\end{array} \! \right) 
\end{equation}
to be invertible, we can solve
the self-consistency relations (\ref{Probe Terminal Constraints})
for $\aff{F}_3,\dots,\aff{F}_n$ obtaining
\begin{equation}
\vecttt{\aff{F}_3}{\vdots}{\aff{F}_n} 
= -\left( \mathbb{L}_{\{3,\dots,n\}} \right)^{-1} 
\vecttt{\mathbb{L}_{32}}{\vdots}{\mathbb{L}_{n3}}\aff{F}_2.
\end{equation}
After inserting this solution into 
(\ref{Phenomenological Equations})
and identifying the heat current $J_q\equiv J_2^q$ leaving the 
hot reservoir and the particle current $J_{\rho}\equiv J_2^{\rho}$,
we end up with the reduced system
\begin{equation}
\vectt{J_{\rho}}{J_q}
= \mathbb{L}^{{{\rm HE}}}
\vectt{\mathcal{F}_{\rho}}{\mathcal{F}_q},
\end{equation}
of phenomenological equations. Here, the effective matrix of 
kinetic coefficients  is given by 
\begin{equation}\label{HE Onsager Matrix}
\mathbb{L}^{{{\rm HE}}} \equiv \mathbb{L}_{22} -
\left(\mathbb{L}_{23}, \cdots, \mathbb{L}_{2n} \right)
\left(\mathbb{L}_{\{3,\dots,n\}}\right)^{-1}
\vecttt{\mathbb{L}_{32}}{\vdots}{\mathbb{L}_{n3}}
\equiv \left( \! \begin{array}{cc}
L_{\rho \rho} & L_{ \rho q}\\
L_{q \rho}    & L_{qq}
\end{array} \! \right)
\end{equation}
and the affinities $\mathcal{F}_{\rho}\equiv\mathcal{F}_2^{\rho}=
\Delta\mu_2/T<0$ and $\mathcal{F}_q\equiv\mathcal{F}_2^q=\Delta 
T_2/T^2>0$ have to be chosen such that $J_{\rho},J_{q}\geq 0$ for
the model to work as a proper heat engine.\\

$\mathbb{L}^{{{\rm HE}}}$ is not a principal submatrix of the full 
Onsager matrix $\mathbb{L}$ and therefore the bound 
(\ref{Asymmetry Bound}) does not apply directly. However, 
$\mathbb{L}^{{{\rm HE}}}$ can be written as the Schur complement $\mathbb{L}/\mathbb{L}_{\{3,\dots,n\}}$ (see
\ref{Apx AI Schur complements} for the definition), the asymmetry
index of which is dominated
by the asymmetry index of $\mathbb{L}$ as proven in Proposition 
\ref{Apx Dominance of Schur Complement} of 
\ref{Apx AI Schur complements}. Consequently, we have 
\begin{equation}
\asym{\mathbb{L}^{{{\rm HE}}}} = 
\asym{\mathbb{L}/\mathbb{L}_{\{3,\dots,n\}}} \leq
\asym{\mathbb{L}} \leq \cot  \left( \frac{\pi}{n}\right).
\end{equation}
or, equivalently,
\begin{equation}\label{HE Bound on Onsager Coefficients}
4L_{\rho\rho}L_{qq} 
-\left(L_{\rho q}+L_{q \rho}\right)^2
\geq \tan^2 \left(\frac{\pi}{n}\right)
\left(L_{\rho q}-L_{q\rho}\right)^2.
\end{equation}
This constraint
shows that whenever $L_{\rho q}\neq L_{q\rho}$, 
the entropy production (\ref{Entropy Production}) must be
strictly larger than zero, thus ruling out the option of 
dissipationless transport generated solely by reversible 
currents for any model with a finite number $n$ of terminals.
For any $n>3$ this constraint is weaker than (\ref{New Bound 2}).
The reason is that the Onsager coefficients in 
(\ref{HE Bound on Onsager Coefficients}) are not elements of 
the full matrix (\ref{Onsager Matrix}) but rather involve 
the inversion of $\mathbb{L}_{\{3,\dots,n\}}$ defined in 
(\ref{HE Inverted Matrix}). 
Still, this constraint is stronger than the bare second law,
which requires only
\begin{equation}\label{HE Second Law}
4L_{\rho\rho}L_{qq}-\left(L_{\rho q}+L_{q \rho}\right)^2\geq 0,
\end{equation}
irrespective of whether or not $\mathbb{L}^{{{\rm HE}}}$ is 
symmetric.\\

\begin{figure}[t]
\centering
\epsfig{file=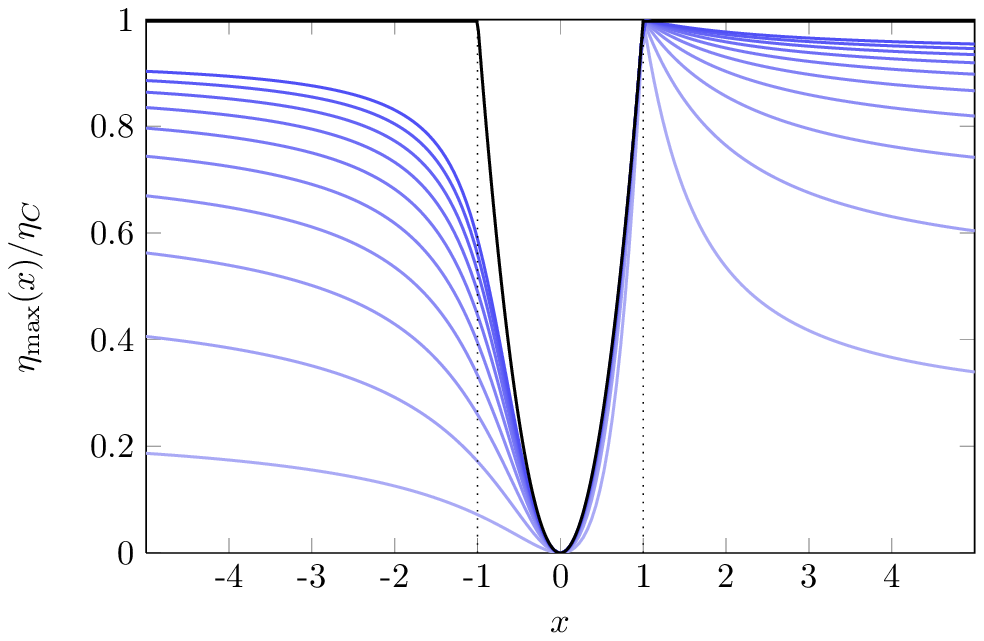, scale=1.15}
\epsfig{file=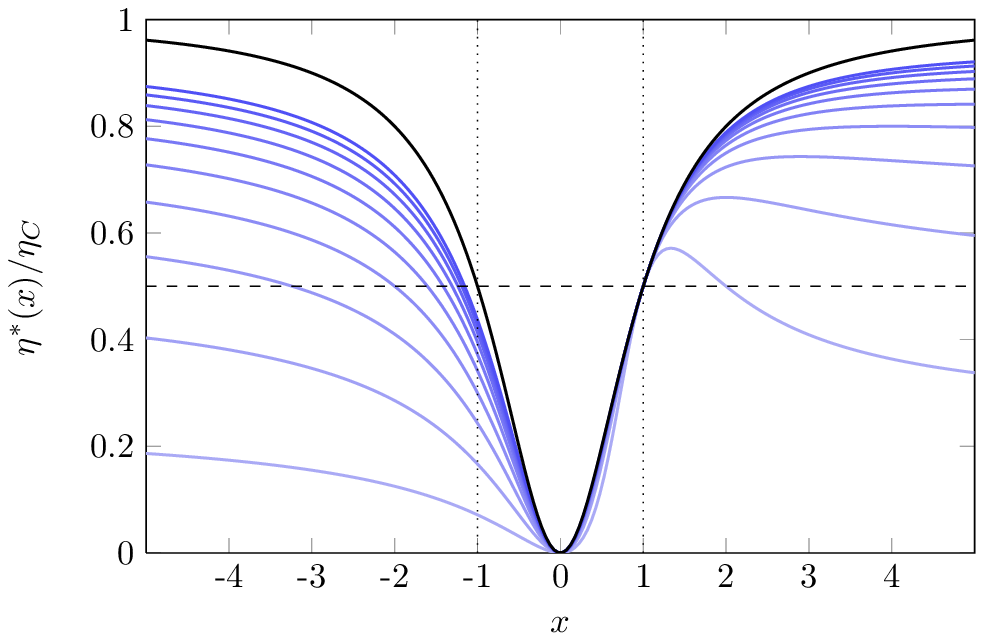, scale=1.15}
\caption{Bounds on the efficiency of the multi-terminal model as 
a thermoelectric heat engine as functions of the asymmetry 
parameter $x$ and in units of $\eta_C$. The upper panel shows 
$\eta_{{{\rm max}}}(x)$  (see (\ref{HE ME as function of x})),
the lower one $\eta^{\ast}(x)$ (see
(\ref{HE EMP as function of x})). In both panels, the blue lines, from
bottom to top, belong to models with $n=3,\dots,12$ terminals and
the solid, black line corresponds to the bound following from the
bare second law as obtained by Benenti \etal \cite{Benenti2011}.
The dashed line  in the lower panel 
marks the Curzon-Ahlborn limit $\eta_{CA}=\eta_C/2$.
\label{Fig HE Efficiecy}
}
\end{figure}

The constraint (\ref{HE Bound on Onsager Coefficients}) implies 
a constraint on the efficiency $\eta$ of such a  particle-exchange 
heat engine \cite{Humphrey2005a}, which is defined as 
\begin{equation}
\eta \equiv - \frac{\Delta\mu_2 J_{\rho}}{J_q}\leq\eta_C.
\end{equation}
Like for any heat engine, this efficiency is subject to the 
Carnot-bound $\eta_C\equiv 1-T/T_2$, which, in the linear response
regime, is given by $\eta_C\approx \Delta T_2/T = T\mathcal{F}_q$.
Following Benenti \etal \cite{Benenti2011}, we now introduce
the dimensionless parameters 
\begin{equation}\label{HE Dimless Parameters}
y\equiv\frac{L_{\rho q}L_{q\rho}}
{L_{\rho\rho}L_{qq}-L_{\rho q}L_{q\rho}}
\qquad {{{\rm and}}} \qquad
x \equiv \frac{L_{\rho q}}{L_{q\rho}},
\end{equation}
which allow us to write the maximum efficiency of the engine
$\eta_{{{\rm max}}}$  (under the condition $J_q>0$) in the instructive 
form 
\cite{Benenti2011}
\begin{equation}\label{HE Efficiency in xy}
\eta_{{{\rm max}}}(x,y) = \eta_C x \frac{\sqrt{y+1}-1}{\sqrt{y+1}+1}.
\end{equation}
Restating the new bound (\ref{HE Bound on Onsager Coefficients}) in
terms of $x$ and $y$ yields
\begin{equation}\label{HE y Constraint}
\begin{array}{lll}
h_n(x)\leq y \leq 0 & \quad {{\rm if}}\quad & x<0,\\
0     \leq y \leq h_n(x) & \quad {{\rm if}}\quad & x>0
\end{array}
\end{equation}
with 
\begin{equation}\label{HE h Function}
h_n(x) \equiv \frac{4x}{(x-1)^2}\cos^2\left( \frac{\pi}{n}\right).
\end{equation}
Consequently, maximizing (\ref{HE Efficiency in xy}) with respect 
to $y$ yields the optimal $y^{\ast}(x)=h_n(x)$ and the maximum efficiency 
\begin{equation}\label{HE ME as function of x}
\eta_{{{\rm max}}}(x)\equiv  \eta_{{{\rm max}}}(x,y^{\ast}(x)) = 
\eta_C x \frac{\sqrt{4x\cos^2 (\pi/n) +(x-1)^2}-|x-1|}
{\sqrt{4x\cos^2 (\pi/n) +(x-1)^2}+ |x-1|}.
\end{equation} 
This bound is plotted in figure \ref{Fig HE Efficiecy}
as a function of $x$ for an increasing number $n$ of 
terminals. For $n=3$, we recover the result obtained in our 
preceding work on the three terminal model \cite{Brandner2013}.
In the limit $n\rightarrow\infty$, $\eta_{{{\rm max}}}(x)$ 
converges to the bound derived by Benenti \etal
\cite{Benenti2011} within a general analysis relying only on the
second law. However, for any finite $n$, $\eta_{{{\rm max}}}(x)$
is constrained to be strictly smaller than $\eta_C$, as soon as
$x$ deviates from $1$. Thus, from the perspective of maximum
efficiency, breaking the time reversal symmetry is not
beneficial.\\

As a second important benchmark for the performance of a heat
engine, we consider its efficiency at maximum power 
$\eta^{\ast}$ \cite{Curzon1975, Esposito2009, Seifert2012} 
obtained by maximizing the power output 
\begin{equation}
P_{{{\rm out}}}\equiv -\Delta \mu_2 J_{\rho}
=- T \mathcal{F}_{\rho} \left( L_{\rho\rho}\mathcal{F}_{\rho}
+L_{\rho q} \mathcal{F}_q \right)
\end{equation}
with respect to $\mathcal{F}_{\rho}$ for fixed $\mathcal{F}_q$.
In terms of the dimensionless parameters (\ref{HE Dimless Parameters}), 
it reads \cite{Benenti2011}
\begin{equation}
\eta^{\ast}(x,y) = \eta_C \frac{xy}{4+2y}
\end{equation}
and attains its maximum 
\begin{equation}\label{HE EMP as function of x}
\eta^{\ast}(x)\equiv \eta^{\ast}(x,y^{\ast}(x)) = \eta_C 
\frac{x^2 \cos^2(\pi/n)}{(x-1)^2 + 2x \cos^2(\pi/n)}
\end{equation}
at $y^{\ast}(x)=h_n(x)$. In the lower panel of figure \ref{Fig HE Efficiecy},
$\eta^{\ast}(x)$ is plotted as a function of the
asymmetry parameter $x$. For $x=1$, this bound acquires
the Curzon-Ahlborn value $\eta_{CA}\equiv
\eta_C/2$. For $x\neq 1$, however it can become significantly higher
even for a small number $n$ of terminals. Specifically, we observe 
that $\eta^{\ast}(x)$ exceeds $\eta_{CA}$ for any $n\geq 3$ in a 
certain range of $x$ values. For $n\geq 4$, this range includes all $x>1$.
Furthermore,
$\eta^{\ast}(x)$ attains its global maximum 
\begin{equation}
\hat{\eta}^{\ast\ast}\equiv
 \frac{\eta_C}{1+ \sin^2\left(\frac{\pi}{n}\right)}
\end{equation}
at the finite value 
$ x = 1/\sin^2 \left(\frac{\pi}{n}\right)$.
Remarkably, both  $\eta_{{{\rm max}}}(x)$ and $\eta^{\ast}
(x)$ approach the same asymptotic value $\eta^{\infty} 
\equiv \eta_C \cos^{2} \left(\frac{\pi}{n}\right)$
for $x\rightarrow\pm\infty$.

\subsection{Refrigerator}

In the preceding section, we discussed the performance of the 
multi-terminal model if it is operated as a heat engine. 
Quite naturally, we can change the mode of operation of 
this engine such that it functions as a refrigerator. The 
resulting device consumes electrical power from which it generates
a heat current from the cold to the hot reservoir. Thus, compared
to the heat engine, input and output are interchanged and the 
affinities $\mathcal{F}_{\rho}<0$ and $\mathcal{F}_q>0$ have to be 
chosen such that both currents $J_{\rho}$ and $J_q$ are negative.\\

\begin{figure}[t]
\centering 
\epsfig{file=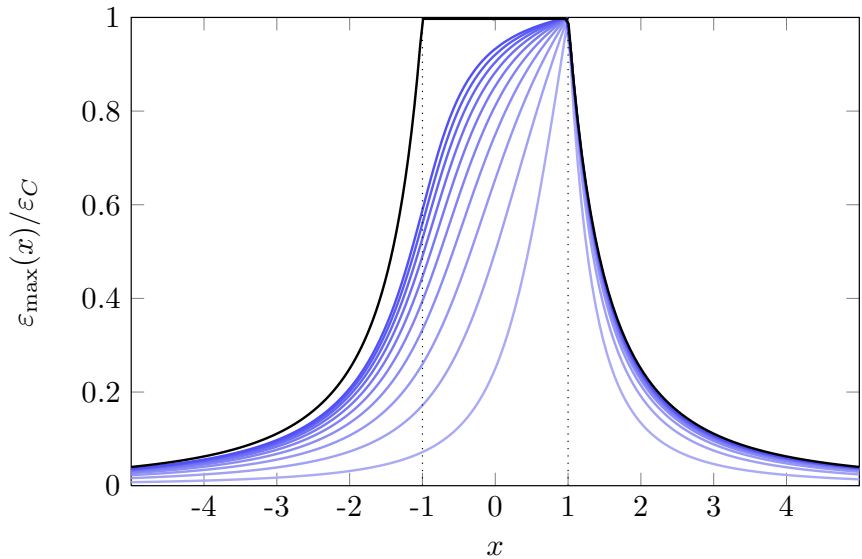, scale=1.15}
\caption{Maximum coefficient of performance $\varepsilon_{{{\rm 
max}}}(x)$ (see (\ref{R ME as function of x})) of a thermoelectric 
refrigerator as a function of the asymmetry parameter $x$. The 
blue lines from bottom to top represent models with $n=3,\dots,12$ 
terminals. The black curve shows the bound required by the bare
second law, which is asymptotically reached in the limit 
$n\rightarrow\infty$. \label{Fig Max COP Refrigerator}}
\end{figure}

Analogously to the case of the heat engine, we will now show that
the bound (\ref{HE Bound on Onsager Coefficients}) on the 
kinetic coefficients constrains the performance of the 
thermoelectric refrigerator described above. To this end, we will
use the coefficient of performance \cite{Callen1985}
\begin{equation}\label{R Efficiency}
\varepsilon\equiv 
-\frac{J_q}{\Delta \mu_2 J_{\rho}}.
\end{equation}
as a benchmark parameter. Its upper bound following from the second
law is given by $\varepsilon_C\equiv T/\Delta T_2 = 1/(T\mathcal{F}_q)$, 
which is the efficiency of the ideal refrigerator.  
In this sense, $\varepsilon_C$ is the analogue to the Carnot 
efficiency.\\

Taking the maximum of $\varepsilon$ over $\mathcal{F}_{\rho}$
(under the condition $J_{\rho}<0$) while keeping $\mathcal{F}_q$
fixed, yields the maximum coefficient of performance
\cite{Benenti2011}
\begin{equation}
\varepsilon_{{{\rm max}}}(x,y) = \frac{\varepsilon_C}{x}
\frac{\sqrt{y+1}-1}{\sqrt{y+1}+1}.
\end{equation}
Here, we used again the dimensionless parameters defined in 
(\ref{HE Dimless Parameters}). Since $y$ is subject to the 
constraint (\ref{HE y Constraint}), $\varepsilon_{{{\rm 
max}}}(x,y)$ attains its maximum
\begin{equation}\label{R ME as function of x}
\varepsilon_{{{\rm max}}}(x) \equiv  \varepsilon_{{{\rm max}}}
(x,y^{\ast}(x)) = \frac{\varepsilon_C}{x}
\frac{\sqrt{4x\cos^2 (\pi/n) +(x-1)^2}-|x-1|}
{\sqrt{4x\cos^2 (\pi/n) +(x-1)^2}+|x-1|}
\end{equation}
with respect to $y$ at $y^{\ast}(x)=h_n(x)$,
where $h_n(x)$ was introduced in (\ref{HE h Function}). Figure 
\ref{Fig Max COP Refrigerator} shows $\eta_{{{\rm max}}}(x)$ 
for models with an increasing number of probe terminals $n$.
For any finite $n$, $\varepsilon_C$ can only be reached for the
symmetric value $x=1$. The black line follows solely from the
second law (\ref{HE Second Law}) and would in principle allow to 
reach $\varepsilon_C$ with finite current for $x$ between 
$-1$ and $1$. However, like for the heat engine, our analysis 
reveals that such a high performance refrigerator would need to 
be equipped with an infinite number of terminals.

\subsection{Isothermal Engine}

By an isothermal, thermoelectric engine, we understand in this 
context a device in which one particle current driven by a
(negative) gradient in chemical potential drives another one 
uphill a chemical potential gradient at constant temperature 
$T$. In order to implement such a machine within the 
multi-terminal framework, we put $\mathcal{F}_2^q =
\cdots = \mathcal{F}_n^q=0$. The remaining affinities 
$\mathcal{F}_2^{\rho},\dots,\mathcal{F}_n^{\rho}$ are connected
to the particle currents via a reduced set of phenomenological 
equations given by 
\begin{equation}\label{IT Unreduced Phenomenologica Equations}
\vecttt{J^{\rho}_2}{\vdots}{J^{\rho}_n} = \left( \!
\begin{array}{ccc}
(\mathbb{L}_{22})_{11} & \cdots & (\mathbb{L}_{2n})_{11}\\
\vdots                 & \ddots & \vdots\\
(\mathbb{L}_{n2})_{11} & \cdots & (\mathbb{L}_{nn})_{11} 
\end{array} \! \right) 
\vecttt{\mathcal{F}^{\rho}_2}{\vdots}{\mathcal{F}^{\rho}_n},
\end{equation}
where $(\mathbb{L}_{\alpha\beta})_{11}$ denotes the (11)-entry of
the block matrix $\mathbb{L}_{\alpha\beta}$ defined in 
(\ref{Landauer Buttiker}). We note that the heat currents $J^q_2,
\dots, J^q_n$ do not necessarily have to vanish. However, since
they do not contribute to the entropy production
(\ref{Entropy Production}), they are irrelevant in the present
analysis. Similar to the treatment of the heat engine, 
we put $J^{\rho}_4= \cdots = J^{\rho}_n =0$, thus considering the
terminals $4,\dots,n$ as pure probe terminals simulating 
inelastic scattering events. Consequently,
(\ref{IT Unreduced Phenomenologica Equations}) can be reduced 
further to the generic form
\begin{equation}
\vectt{J_2^{\rho}}{J_3^{\rho}} = 
\mathbb{L}^{{{\rm IE}}}
\vectt{\mathcal{F}_2^{\rho}}{\mathcal{F}_3^{\rho}}.
\end{equation}
Here, we have introduced the matrix 
\begin{eqnarray}
\mathbb{L}^{{{\rm IE}}} & \equiv 
\left. \left( \!
\begin{array}{ccc}
(\mathbb{L}_{22})_{11} & \cdots & (\mathbb{L}_{2n})_{11}\\
\vdots                 & \ddots & \vdots\\
(\mathbb{L}_{n2})_{11} & \cdots & (\mathbb{L}_{nn})_{11} 
\end{array} \! \right) \right/
\left( \!
\begin{array}{ccc}
(\mathbb{L}_{44})_{11} & \cdots & (\mathbb{L}_{4n})_{11}\\
\vdots                 & \ddots & \vdots\\
(\mathbb{L}_{n4})_{11} & \cdots & (\mathbb{L}_{nn})_{11} 
\end{array} \! \right) \nonumber \\
& \equiv\left( \! 
\begin{array}{cc}
L_{22} & L_{23}\\
L_{32} & L_{33}
\end{array} \! \right)
\end{eqnarray}
again using the Schur complement defined in \ref{Apx AI Schur complements}.
The affinities $\mathcal{F}^{\rho}_2, \mathcal{F}^{\rho}_3>0$
have to be chosen such that $J^{\rho}_2$ is negative and $J^{\rho}_3$
is positive to ensure that the device pumps particles into the 
reservoir $2$ against the gradient in chemical potential $\Delta
\mu_2$. \\

We will now derive a bound on the elements of $\mathbb{L}^{{{
\rm IE}}}$. By employing expression
(\ref{MT Landauer Buttiker Tensor Product}), we can write
\begin{eqnarray}
\left( \!
\begin{array}{ccc}
(\mathbb{L}_{22})_{11} & \cdots & (\mathbb{L}_{2n})_{11}\\
\vdots                 & \ddots & \vdots\\
(\mathbb{L}_{n2})_{11} & \cdots & (\mathbb{L}_{nn})_{11} 
\end{array} \! \right)
& = \frac{T e^2}{h}\int_{-\infty}^{\infty} dE \; F(E) 
\left( \id - \bar{\mathbb{T}}(E) \right)
\nonumber \\
& \equiv \mathcal{N}
 \left( \id - \langle \bar{\mathbb{T}}\rangle \right)
\end{eqnarray}
with
\begin{equation}
\mathcal{N}
\equiv \frac{Te^2}{h}\int_{-\infty}^{\infty} dE \; F(E)
= \frac{Te^2}{h}
\end{equation}
and 
\begin{equation}
\langle \bar{\mathbb{T}}\rangle \equiv 
\int_{-\infty}^{\infty} dE \; F(E)
\bar{\mathbb{T}}(E).
\end{equation}
Since $\bar{\mathbb{T}}(E)$ is doubly substochastic for any $E$,
the matrix $\langle \bar{\mathbb{T}}\rangle$ is also
doubly substochastic. Therefore, by applying Corollary 
\ref{Apx AI SC of Doubls SS Matrices} of \ref{Apx AI Schur complements},
we find 
\begin{eqnarray}\label{IE AI}
\asym{\mathbb{L}^{{{\rm IE}}}}
=\asym{\frac{\mathbb{L}^{{{\rm IE}}}}{\mathcal{N}}}
& =\asym{\left.\left(\id-\langle\bar{\mathbb{T}}\rangle
\right)\right/
\left(\id -\langle\bar{\mathbb{T}}\rangle\right)_{\{3,\dots,n-1\}}}
\nonumber\\
& \leq \cot\left(\frac{\pi}{3}\right) = \frac{1}{\sqrt{3}},
\end{eqnarray}
where $\left(\id-\langle\bar{\mathbb{T}}\rangle\right)_{
\{3,\dots,n-1\}}$ denotes the principal submatrix of 
$\id-\langle\bar{\mathbb{T}}\rangle$ consisting of all
but the first two rows and columns. 
Expressing (\ref{IE AI}) in terms of the elements of 
$\mathbb{L}^{{{\rm IE}}}$ gives the bound 
\begin{equation}\label{IE Bond on Onsager Coefficients}
4 L_{22}L_{33}- \left(
L_{23}+L_{32}\right)^2 \leq 3\left(L_{23}-L_{32}\right)^2.
\end{equation}
We emphasize that, in contrast to the bound 
(\ref{HE Bound on Onsager Coefficients}) we derived for the heat
engine, the bound (\ref{IE Bond on Onsager Coefficients}) is independent 
of the number of  probe terminals involved in the device.\\

\begin{figure}[t]
\centering
\epsfig{file=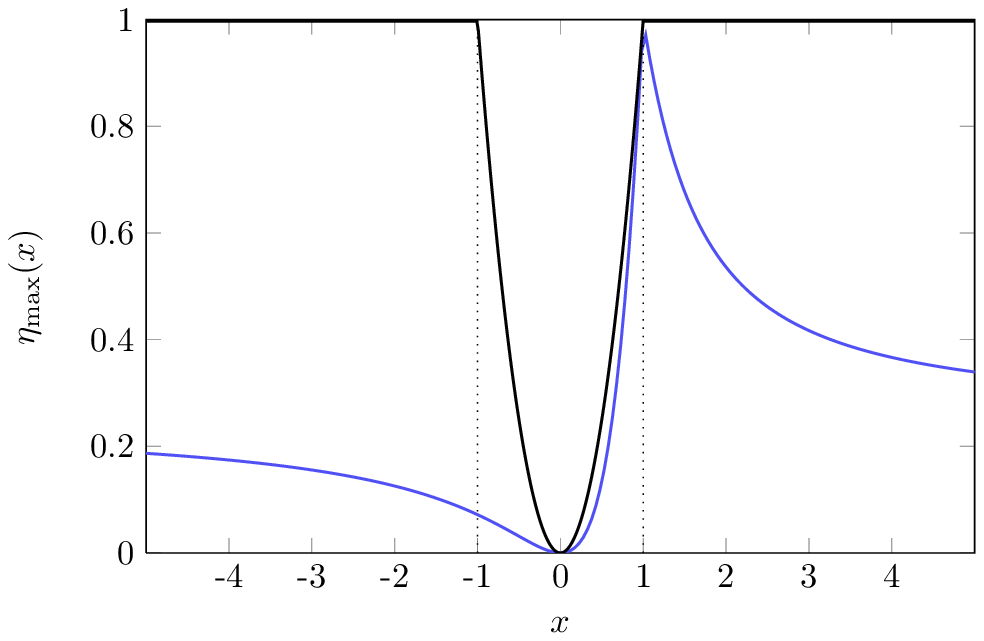, scale=0.78}
\epsfig{file=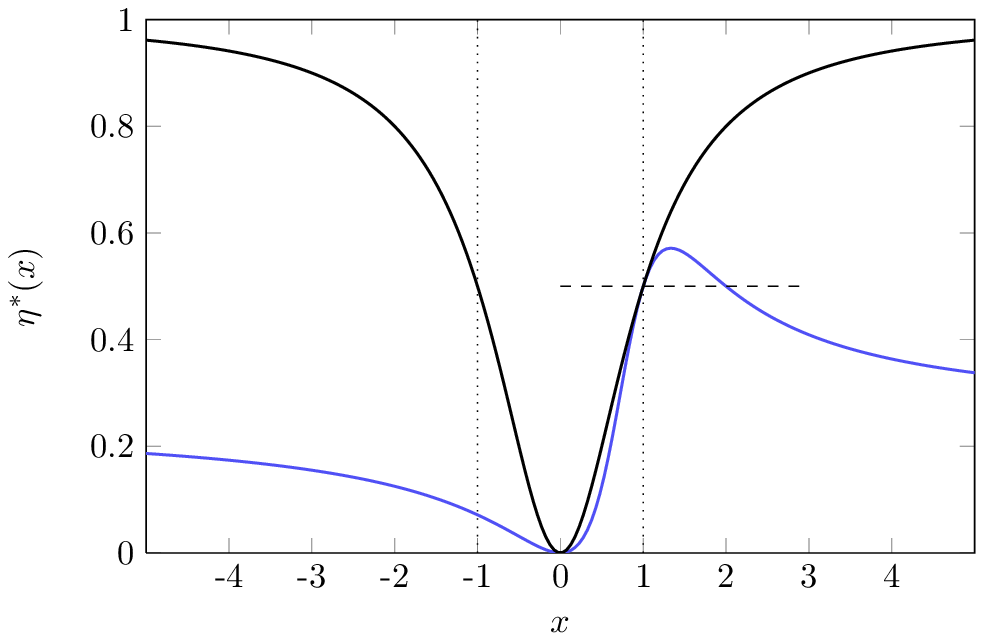, scale=0.78}
\caption{Bounds on benchmark parameters for the performance 
of the isothermal, thermoelectric engine as functions of the 
asymmetry parameter $x$. The right panel shows the maximum 
efficiency $\eta_{{{\rm max}}}(x)$ (see
(\ref{IE ME as function of x})), the left one efficiency at maximum
power $\eta^{\ast}(x)$ (see (\ref{IE EMP as function of x})). 
The black lines follow from the bare second law, the blue lines from
the stronger constraint (\ref{IE Bond on Onsager Coefficients}). 
Both, $\eta_{{{\rm max}}}(x)$ and $\eta^{\ast}(x)$
asymptotically reach the value $1/4$. The 
dashed line in the right plot marks the value $1/2$ of $\eta^{
\ast}(x)$ at the symmetric value $x=1$.
\label{Fig IE Performance}}
\end{figure}

In the next step we explore the implications of
(\ref{IE Bond on Onsager Coefficients}) for the performance of 
the isothermal engine. To this end, we identify the output power
of the device as 
\begin{equation}
P_{{{\rm out}}}\equiv -\Delta \mu_2 J^{\rho}_2 
= - T \mathcal{F}^{\rho}_2 J^{\rho}_2
\end{equation}
and correspondingly the input power as 
\begin{equation}
P_{{{\rm in}}}\equiv \Delta \mu_3 J^{\rho}_3 
= T \mathcal{F}^{\rho}_3 J^{\rho}_3. 
\end{equation}
Consequently, the efficiency of the isothermal engine
reads
\begin{equation}
\eta = \frac{P_{{{\rm out}}}}{P_{{{\rm in}}}} = - 
\frac{ \mathcal{F}^{\rho}_2 J^{\rho}_2}
{\mathcal{F}^{\rho}_3 J^{\rho}_3}.
\end{equation}
We note that, in the situation considered here, the entropy 
production (\ref{Entropy Production}) reduces to 
\begin{equation}
\dot{S}= \mathcal{F}^{\rho}_2 J^{\rho}_2
+\mathcal{F}^{\rho}_3 J^{\rho}_3
\end{equation}
and thus the second law $\dot{S}\geq 0$ requires $\eta\leq 1$
for isothermal engines \cite{Seifert2012}.\\

Optimizing $\eta$ and  $P_{{{\rm out}}}$ (under the condition 
$J_3^{\rho} >0$) with respect to $\mathcal{F}_2^{\rho}$
while keeping $\mathcal{F}_3^{\rho}$ fixed yields the 
maximum efficiency 
\begin{equation}
\eta_{{{\rm max}}}(x,y)= x \frac{\sqrt{y+1}-1}{\sqrt{y+1}+1}
\end{equation}
and the efficiency at maximum power 
\begin{equation}
\eta^{\ast}(x,y) = \frac{xy}{4+2y},
\end{equation}
where we have introduced the dimensionless parameters 
\begin{equation}
y \equiv \frac{L_{23}L_{32}}{L_{22}L_{33}-L_{23}L_{32}}
\qquad {{\rm and}} \qquad
x \equiv \frac{L_{23}}{L_{32}}
\end{equation}
analogous to (\ref{HE Dimless Parameters}). Using these 
definitions, the bound (\ref{IE Bond on Onsager Coefficients})
translates to 
\begin{equation}
\begin{array}{lll}
h(x)\leq y \leq 0 \qquad &{{\rm if}} \qquad & x<0,\\
0\leq y \leq h(x) \qquad &{{\rm if}} \qquad & x>0
\end{array}
\end{equation}
with
\begin{equation}
h(x)= \frac{x}{(x-1)^2}
\end{equation}
and $\eta_{{{\rm max}}}(x,y)$ as well as $\eta^{\ast}(x,y)$ attain
their respective maxima with respect to $y$ at $y^{\ast}=h(x)$. The resulting 
bounds 
\begin{equation}\label{IE ME as function of x}
\eta_{{{\rm max}}}(x)\equiv \eta_{{{\rm max}}}(x,y^{\ast}(x))
=x \frac{\sqrt{x^2-x+1}-|x-1|}{\sqrt{x^2-x+1}+|x-1|}
\end{equation}
and 
\begin{equation}\label{IE EMP as function of x}
\eta^{\ast}(x)\equiv \eta^{\ast}(x,y^{\ast}(x))
=\frac{x^2}{4x^2-6x+4}
\end{equation}
are plotted in figure (\ref{Fig IE Performance}). We observe that
the $\eta_{{{\rm max}}}(x)$ reaches $1$ only for $x=1$ and 
decreases rapidly as the asymmetry parameter $x$ deviates from $1$, 
while  $\eta^{\ast}(x)$ exceeds the Curzon-Ahlborn value
$1/2$ for $x$ between $1$ and $2$ with a global maximum $\eta^{\ast\ast}=
4/7$ at $x=4/3$.  In contrast to the non-isothermal engines analyzed 
in the preceding sections, all these bounds do not depend on the number
of probe terminals. 

\subsection{Absorption Refrigerator}

By an absorption refrigerator, one commonly understands a device
that generates a heat current cooling a hot reservoir, while itself
being supplied by a heat source \cite{Palao2001,Skrzypczyk2011}. 
The multi-terminal model allows to implement such a device by 
following a very similar strategy like the one used for the 
isothermal engine, i.e., we put $\mathcal{F}_2^{\rho}=\dots
=\mathcal{F}_n^{\rho}=0$ and end up with the reduced system of 
phenomenological equations
\begin{equation}\label{AR Full PhenEq}
\vecttt{J_2^q}{\vdots}{J_n^q}=
\left(\! \begin{array}{ccc}
(\mathbb{L}_{22})_{22} & \cdots & (\mathbb{L}_{2n})_{22}\\
\vdots                 & \ddots & \vdots\\
(\mathbb{L}_{2n})_{22} & \cdots & (\mathbb{L}_{nn})_{22}
\end{array}\! \right)
\vecttt{\mathcal{F}_2^q}{\vdots}{\mathcal{F}_n^q}
\end{equation}
connecting the heat currents with the temperature gradients. 
Assuming the terminals $4,\dots,n$ to be pure probe terminals
then leads to
\begin{equation}
\vectt{J_2^q}{J_3^q} = 
\mathbb{L}^{{{\rm AR}}}
\vectt{\mathcal{F}^q_2}{\mathcal{F}_3^q},
\end{equation}
where $\mathcal{F}_2^q<0$, $\mathcal{F}_3^q>0$ have to be 
adjusted such that $J^2_q>0$ and $J^3_q>0$. The matrix 
$\mathbb{L}^{\rm AR}$ is given by
\begin{eqnarray}
\mathbb{L}^{{{\rm AR}}}
& = \left. \left(\! \begin{array}{ccc}
(\mathbb{L}_{22})_{22} & \cdots & (\mathbb{L}_{2n})_{22}\\
\vdots                 & \ddots & \vdots\\
(\mathbb{L}_{2n})_{22} & \cdots & (\mathbb{L}_{nn})_{22}
\end{array}\! \right)\right/
\left(\! \begin{array}{ccc}
(\mathbb{L}_{44})_{22} & \cdots & (\mathbb{L}_{4n})_{22}\\
\vdots                 & \ddots & \vdots\\
(\mathbb{L}_{4n})_{22} & \cdots & (\mathbb{L}_{nn})_{22}
\end{array}\! \right) \nonumber\\
& \equiv \left( \! \begin{array}{cc}
L_{22}' & L_{23}'\\
L_{32}' & L_{33}'
\end{array}\! \right)
\end{eqnarray}
and by following the reasoning of the last section, we can derive 
the bound
\begin{equation}\label{AR Bond on Onsager Coefficients}
4 L_{22}'L_{33}'- \left(
L_{23}'+L_{32}'\right)^2
\leq 3\left(L_{23}'-L_{32}'\right)^2.
\end{equation}

\begin{figure}[t]
\centering 
\epsfig{file=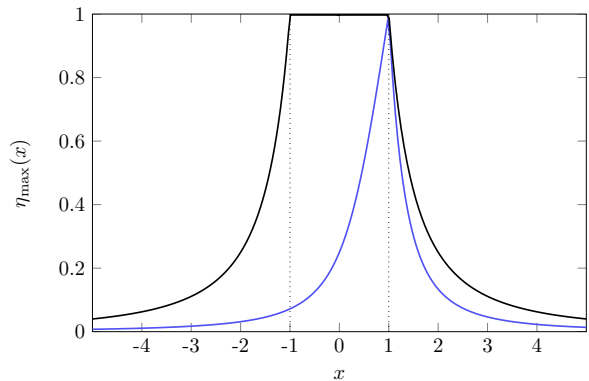, scale=0.78}
\caption{Maximum efficiency $\eta_{{{\rm max}}}(x)$ (see 
(\ref{AR Efficiency Bound})) of the thermoelectric absorption 
refrigerator as a function of $x$. The blue line follows by virtue
of the constraint (\ref{AR Bond on Onsager Coefficients}), the black line
by invoking only the second law. 
 \label{Fig Max Efficiency AR}}
\end{figure}

The efficiency of the absorption refrigerator can be consistently 
defined as 
\begin{equation}\label{AR Efficiency}
\eta\equiv - \frac{\Delta T_2 J_2^q}{\Delta T_3 J^q_3}
= - \frac{\mathcal{F}^q_2 J^q_2}{\mathcal{F}^q_3 J^q_3}\leq 1. 
\end{equation}
Just like for the isothermal engine, after maximizing this efficiency 
over $\mathcal{F}_2^q$ (under the condition $J_2^q>0$), we can derive 
an upper bound 
\begin{equation}\label{AR Efficiency Bound}
\eta_{{{\rm max}}}(x)\equiv 
\frac{1}{x}\frac{\sqrt{x^2-x+1}-|x-1|}{\sqrt{x^2-x+1}+|x-1|}
\end{equation}
from (\ref{AR Bond on Onsager Coefficients}). Again, this bound is 
independent of the number of probe terminals. Figure 
\ref{Fig Max Efficiency AR} shows it as a function of the asymmetry 
parameter $x \equiv L_{23}'/L_{32}'$.\\

For completeness, we emphasize that the efficiency (\ref{AR Efficiency})
used here differs from the coefficient of performance 
\begin{equation}
\varepsilon\equiv \frac{J^q_2}{J^q_3} =
\frac{L'_{22}\mathcal{F}^q_2+L'_{23}\mathcal{F}_3^q}{
L'_{32}\mathcal{F}^q_2 + L'_{33}\mathcal{F}^q_{3}}
\end{equation}
used as a benchmark parameter in \cite{Palao2001} and
\cite{Skrzypczyk2011}. Since $\varepsilon$ is 
unbounded in the linear response regime, maximization 
with respect to $\mathcal{F}^q_2$ or $\mathcal{F}^q_3$ would be
meaningless. 

\section{Conclusion and Outlook}

We have studied the influence of broken time reversal symmetry on 
thermoelectric transport within the quite general framework of an
$n$-terminal model. Our analytical calculations prove that the 
asymmetry index of any principal submatrix of the full Onsager
matrix  defined in (\ref{Onsager Matrix}) is bounded according to
(\ref{Asymmetry Bound}). This somewhat abstract bound can be 
translated into the set (\ref{New Bound 2}) of new constraints on
the kinetic coefficients. Any of these 
constraints is obviously stronger than the bare second law and can
not be deduced from Onsagers time reversal argument. Furthermore,
we note that it is straight forward to repeat the procedure carried
out in section 3.2 for larger principal submatrices, thus obtaining
relations analogous to (\ref{New Bound 2}), which involve successively
higher order products of kinetic coefficients. Investigating this 
hierarchy of constraints will be left to future work. \\

After the general analysis of the transport processes in the full 
multi-terminal set-up, we investigated the consequences of our new
bounds on the performance of the model if operated as a thermoelectric
heat engine. We found that both the maximum efficiency as well as the
efficiency at maximum power are subject to bounds, which strongly depend 
on the number $n$ of terminals. In the minimal case $n=3$, we recover the 
strong bounds already discussed in \cite{Brandner2013}. Although our new 
bounds become successively weaker as $n$ is increased, they prove that 
reversible transport is impossible in any situation with a finite number
of terminals. Only in the limit $n\rightarrow\infty$ we are back at the
situation discussed by Benenti \etal \cite{Benenti2011}, in which the
second law effectively is the only constraint. We recall that for $n=3$
our bounds can indeed be saturated as  Balachandran \etal 
\cite{Balachandran2013} have shown within a specific model. Whether or
not it is possible to saturate the bounds for  higher $n$ remains open
at this stage and constitutes an important question for future 
investigations.\\

Like in the case of the heat engine, the bound on the maximum coefficient
of performance we derived for the thermoelectric refrigerator becomes
weaker as $n$ increases. Interestingly, the situation is quite different
for the isothermal engine and the absorption refrigerator considered in 
the sections 4.3 and 4.4. The bounds on the respective benchmark parameters
equal those of the three-terminal case irrespective of the actual number
of terminals involved. If one assumed that any kind of inelastic scattering
could be simulated by a sufficiently large number of probe terminals, one 
had to conclude that the results shown in figures \ref{Fig IE Performance}
and \ref{Fig Max Efficiency AR} were a universal bound on the efficiency 
of any such device. At least, the results of sections 4.3 and 4.4 suggest
a fundamental difference between transport processes under broken time-reversal symmetry that are driven by only one type of affinities, i.e., either
chemical potential differences or temperature differences, and those, which
are induced by both types of thermodynamic forces.\\

We emphasize that technically all our results ultimately rely on the sum rules 
(\ref{Sum Rule for T}) for the elements of the transmission matrix. These
constraints reflect the fundamental law of current conservation, which 
should be seen as the basic physical principle behind our bounds. Therefore
the validity of these bounds is not limited to the quantum realm. It rather 
extends to any model, quantum or classical, for which the kinetic coefficients
can be expressed in the generic form (\ref{Landauer Buttiker}). Some specific
examples for quantum mechanical models  which fulfil this requirement are discussed in \cite{Balachandran2013} and \cite{Sanchez2011}. A classical model
belonging to this class was recently introduced by Horvat \etal
\cite{Horvat2012}.\\

In summary, we have achieved a fairly complete picture of thermoelectric 
transport under broken time reversal symmetry in systems with non-interacting
particles for which the Onsager coefficients can be expressed in the
Landauer-B\"uttiker form (\ref{Landauer Buttiker}). However, fully interacting
systems, which require to go beyond the single particle picture, are not
covered by our analysis yet. Exploring these systems remains one of the major
challenges for future research.

\ack
We gratefully acknowledge stimulating discussions with K. Saito and support 
of the ESF through the EPSD network.  
\newpage

\appendix

\section{Quantifying the asymmetry of positive semi-definite
matrices}\label{Apx AI Basic properties}

We first recall the definition (\ref{Asymmetry Index}) 
\begin{equation}\label{Apx Asymmetry Index}
\mathcal{S}(\mathbb{A}) \equiv 
\min \left\{ \left. s \in \mathbb{R} \right\vert 
\forall \mathbf{z} \in \mathbb{C}^{m} \; \; \;
\mathbf{z}^{\dagger}
\left(s \left( \mathbb{A} + \mathbb{A}^t \right)
+ i \left( \mathbb{A}
-  \mathbb{A}^t \right) \right)\mathbf{z} \geq 0.
\right\},
\end{equation}
of the asymmetry index of an arbitrary positive semi-definite 
matrix $\mathbb{A}\in\mathbb{R}^{m\times m}$. Below, we list 
some of the basic properties of this quantity, which can be 
inferred directly from its definition. 

\begin{proposition}[Basic properties of the asymmetry index]
For any positive semi-definite $\mathbb{A}\in\mathbb{R}^{m\times 
m}$ and  $\lambda >0$, we have
\begin{equation}
\asym{\mathbb{A}}= \asym{\lambda\mathbb{A}}= \asyms{\mathbb{A}^t}
\end{equation}
and
\begin{equation}
\asym{\mathbb{A}}\geq 0
\end{equation}
with equality if and only if $\mathbb{A}$ is symmetric.
If $\mathbb{A}$ is invertible, it holds additionally
\begin{equation}
\asym{\mathbb{A}}=\asyms{\mathbb{A}^{-1}}.
\end{equation}
\end{proposition}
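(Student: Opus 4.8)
The plan is to work directly from the definition \eqref{Apx Asymmetry Index}, treating each of the four claimed identities/inequalities separately, since they are of rather different character. Throughout, write $\mathbb{M}(s,\mathbb{A}) \equiv s(\mathbb{A}+\mathbb{A}^t) + i(\mathbb{A}-\mathbb{A}^t)$ for the Hermitian matrix appearing inside the quadratic form, so that $\asym{\mathbb{A}}$ is the infimum of the set of $s$ for which $\mathbb{M}(s,\mathbb{A})\succeq 0$. A preliminary observation I would record first is that this set is actually a closed half-line $[\asym{\mathbb{A}},\infty)$: it is convex (a convex combination of $\mathbb{M}(s_1,\mathbb{A})$ and $\mathbb{M}(s_2,\mathbb{A})$ with the same imaginary part is again of the form $\mathbb{M}$ with the averaged $s$, hence positive semi-definite), it is closed (positive semi-definiteness is a closed condition and $s\mapsto\mathbb{M}(s,\mathbb{A})$ is continuous), and it is nonempty and bounded below because for $s$ large $\mathbb{M}(s,\mathbb{A}) = s(\mathbb{A}+\mathbb{A}^t) + i(\mathbb{A}-\mathbb{A}^t)$ is dominated by the positive semi-definite part $s(\mathbb{A}+\mathbb{A}^t)$ once one checks that the kernel of $\mathbb{A}+\mathbb{A}^t$ is annihilated by $\mathbb{A}-\mathbb{A}^t$ as well (this uses $\mathbb{A}\succeq 0$: if $\mathbf{z}^\dagger(\mathbb{A}+\mathbb{A}^t)\mathbf{z}=0$ then $\mathbf{z}$ lies in the kernel of the symmetric part, and since $\mathbb{A}\succeq 0$ this forces $\mathbb{A}\mathbf{z}=\mathbb{A}^t\mathbf{z}=0$). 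This justifies writing $\min$ rather than $\inf$ in the definition.

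For the scaling invariance $\asym{\lambda\mathbb{A}}=\asym{\mathbb{A}}$ with $\lambda>0$, simply note $\mathbb{M}(s,\lambda\mathbb{A}) = \lambda\,\mathbb{M}(s,\mathbb{A})$, so the two parametrized families are positive semi-definite for exactly the same values of $s$. For the transpose identity $\asym{\mathbb{A}^t}=\asym{\mathbb{A}}$, observe that $\mathbb{M}(s,\mathbb{A}^t) = s(\mathbb{A}+\mathbb{A}^t) - i(\mathbb{A}-\mathbb{A}^t) = \overline{\mathbb{M}(s,\mathbb{A})}$ (complex conjugate entrywise); since $\mathbf{z}^\dagger \overline{\mathbb{M}}\,\mathbf{z} = \overline{\mathbf{z}^\dagger \mathbb{M}\,\overline{\mathbf{z}}} = \overline{\overline{\mathbf{z}}^\dagger \mathbb{M}\,\overline{\mathbf{z}}}$ and $\overline{\mathbf{z}}$ ranges over all of $\mathbb{C}^m$ as $\mathbf{z}$ does, positivity of one form is equivalent to positivity of the other. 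The nonnegativity $\asym{\mathbb{A}}\ge 0$ follows because $s=0$ gives $\mathbb{M}(0,\mathbb{A}) = i(\mathbb{A}-\mathbb{A}^t)$, whose diagonal entries are $0$; a Hermitian matrix with a zero diagonal is positive semi-definite only if it vanishes (any nonzero off-diagonal entry produces a $2\times 2$ principal submatrix with zero trace and negative determinant), so $\mathbb{M}(0,\mathbb{A})\succeq 0$ forces $\mathbb{A}=\mathbb{A}^t$; conversely if $\mathbb{A}$ is symmetric then $\mathbb{M}(0,\mathbb{A}) = 0 \succeq 0$, giving $\asym{\mathbb{A}}=0$. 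This settles the equality-iff-symmetric clause.

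The inversion identity $\asym{\mathbb{A}^{-1}} = \asym{\mathbb{A}}$ for invertible $\mathbb{A}$ is the one genuinely substantive step, and I expect it to be the main obstacle. The idea is a congruence trick: for invertible $\mathbb{A}$ one has $\mathbb{A}^{-1} = \mathbb{A}^{-t}\mathbb{A}^t\mathbb{A}^{-1}$ and, more usefully, $(\mathbb{A}^{-1}) + (\mathbb{A}^{-1})^t = \mathbb{A}^{-1}(\mathbb{A}^t + \mathbb{A})\mathbb{A}^{-t}$ and $(\mathbb{A}^{-1}) - (\mathbb{A}^{-1})^t = \mathbb{A}^{-1}(\mathbb{A}^t - \mathbb{A})\mathbb{A}^{-t} = -\mathbb{A}^{-1}(\mathbb{A} - \mathbb{A}^t)\mathbb{A}^{-t}$. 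Hence $\mathbb{M}(s,\mathbb{A}^{-1}) = \mathbb{A}^{-1}\bigl(s(\mathbb{A}+\mathbb{A}^t) - i(\mathbb{A}-\mathbb{A}^t)\bigr)\mathbb{A}^{-t} = \mathbb{A}^{-1}\,\overline{\mathbb{M}(s,\mathbb{A})}\,\mathbb{A}^{-t}$. Since $\mathbb{A}^{-t} = (\mathbb{A}^{-1})^t = \overline{(\mathbb{A}^{-1})^\dagger}$ and $\mathbb{A}$ is real so $\mathbb{A}^{-t} = (\mathbb{A}^{-1})^\dagger$ fails in general — instead I would write $\mathbb{A}^{-1}\overline{\mathbb{M}}\,\mathbb{A}^{-t} = \overline{\,\overline{\mathbb{A}^{-1}}\,\mathbb{M}\,\overline{\mathbb{A}^{-t}}\,} = \overline{\mathbb{A}^{-1}\mathbb{M}(\mathbb{A}^{-1})^t}$ using that $\mathbb{A}$ is real, and then note $\mathbb{A}^{-1}\mathbb{M}(\mathbb{A}^{-1})^t = \mathbb{A}^{-1}\mathbb{M}(\mathbb{A}^{-1})^\dagger$ only when $\mathbb{A}^{-1}$ is real, which it is. So $\mathbb{M}(s,\mathbb{A}^{-1})$ is the complex conjugate of a congruence $\mathbb{B}\,\mathbb{M}(s,\mathbb{A})\,\mathbb{B}^\dagger$ of $\mathbb{M}(s,\mathbb{A})$ by the invertible real matrix $\mathbb{B} = \mathbb{A}^{-1}$. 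A congruence by an invertible matrix preserves positive semi-definiteness (Sylvester's law of inertia, or directly: $\mathbf{z}^\dagger \mathbb{B}\mathbb{M}\mathbb{B}^\dagger\mathbf{z} = (\mathbb{B}^\dagger\mathbf{z})^\dagger \mathbb{M}(\mathbb{B}^\dagger\mathbf{z})$ and $\mathbb{B}^\dagger\mathbf{z}$ ranges over $\mathbb{C}^m$), and we already showed conjugation does too. Therefore $\mathbb{M}(s,\mathbb{A}^{-1})\succeq 0 \iff \mathbb{M}(s,\mathbb{A})\succeq 0$, and the two asymmetry indices coincide. The only real care needed is keeping the real-versus-conjugate bookkeeping straight and invoking that $\mathbb{A}$ (hence $\mathbb{A}^{-1}$) is a \emph{real} matrix, which is built into the definition.
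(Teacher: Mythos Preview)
The paper does not actually prove this proposition: it merely states the listed properties and remarks that they ``can be inferred directly from its definition.'' There is therefore no paper proof to compare against; the only question is whether your argument is correct, and it is.

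Your treatments of scaling invariance ($\mathbb{M}(s,\lambda\mathbb{A})=\lambda\,\mathbb{M}(s,\mathbb{A})$), transpose invariance ($\mathbb{M}(s,\mathbb{A}^t)=\overline{\mathbb{M}(s,\mathbb{A})}$, with conjugation preserving positive semi-definiteness), and the nonnegativity/equality case (a Hermitian matrix with zero diagonal is positive semi-definite only if it vanishes) are all correct and essentially the natural arguments. The inversion identity is the only substantive step, and your congruence computation
\[
\mathbb{M}(s,\mathbb{A}^{-1})=\mathbb{A}^{-1}\bigl(s(\mathbb{A}+\mathbb{A}^t)-i(\mathbb{A}-\mathbb{A}^t)\bigr)(\mathbb{A}^{-1})^{t}
=\mathbb{A}^{-1}\,\overline{\mathbb{M}(s,\mathbb{A})}\,(\mathbb{A}^{-1})^{\dagger}
\]
(using that $\mathbb{A}^{-1}$ is real, so $(\mathbb{A}^{-1})^{t}=(\mathbb{A}^{-1})^{\dagger}$) is right; congruence by an invertible matrix and entrywise conjugation both preserve positive semi-definiteness, giving $\asym{\mathbb{A}^{-1}}=\asym{\mathbb{A}}$.

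Two minor expository points, neither a real gap. First, in your preliminary half-line discussion you claim ``nonempty and bounded below'' but the argument following it only establishes nonemptiness; boundedness below is not separately argued but actually falls out of your later analysis (for $\mathbb{A}\neq\mathbb{A}^t$ you show $s=0$ is excluded, forcing the convex set to be $[a,\infty)$ with $a>0$; for symmetric $\mathbb{A}\neq 0$ one has $\mathbb{M}(s,\mathbb{A})=2s\mathbb{A}$ and the set is $[0,\infty)$). Second, in the transpose step the intermediate expression $\overline{\mathbf{z}^{\dagger}\mathbb{M}\,\overline{\mathbf{z}}}$ is not equal to $\mathbf{z}^{\dagger}\overline{\mathbb{M}}\,\mathbf{z}$; the correct identity is $\mathbf{z}^{\dagger}\overline{\mathbb{M}}\,\mathbf{z}=\overline{\,\overline{\mathbf{z}}^{\dagger}\mathbb{M}\,\overline{\mathbf{z}}\,}$, which is what you end with anyway. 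The inversion paragraph also reads as a rough draft and should be streamlined, but the mathematics is sound.
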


Furthermore, we can easily prove the following two propositions,
which are crucial for the derivation of our main results. 

\begin{proposition}[Convexity of the asymmetry index]
\label{Apx Convexity AI}
Let $\mathbb{A},\mathbb{B} \in \mathbb{R}^{m \times m}$ be
positive semi-definite, then
\begin{equation}
\asym{\mathbb{A}+ \mathbb{B}} \leq 
\max\left\{ \asym{\mathbb{A}}, \asym{\mathbb{B}}\right\}.
\end{equation}
\end{proposition}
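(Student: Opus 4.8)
The plan is to unfold the definition of the asymmetry index and reduce the claim to a pointwise statement about the quadratic forms involved. Write $s_A \equiv \asym{\mathbb{A}}$ and $s_B \equiv \asym{\mathbb{B}}$, and set $s^* \equiv \max\{s_A, s_B\}$. By the definition (\ref{Apx Asymmetry Index}), the asymmetry index of a positive semi-definite matrix $\mathbb{M}$ is characterized by the property that, for every $s \geq \asym{\mathbb{M}}$, the Hermitian matrix $s(\mathbb{M}+\mathbb{M}^t) + i(\mathbb{M}-\mathbb{M}^t)$ is positive semi-definite; this monotonicity in $s$ follows because $\mathbb{M}+\mathbb{M}^t$ is itself positive semi-definite, so increasing $s$ only adds a positive semi-definite term. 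Hence it suffices to show that $s^*(\mathbb{C}+\mathbb{C}^t) + i(\mathbb{C}-\mathbb{C}^t) \succeq 0$ for $\mathbb{C} \equiv \mathbb{A}+\mathbb{B}$, since this will force $\asym{\mathbb{A}+\mathbb{B}} \leq s^*$.

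First I would verify that $\mathbb{A}+\mathbb{B}$ is indeed positive semi-definite, so that the asymmetry index is defined for it; this is immediate since the sum of positive semi-definite matrices is positive semi-definite. Next, the key algebraic observation is that the map $\mathbb{M} \mapsto s(\mathbb{M}+\mathbb{M}^t) + i(\mathbb{M}-\mathbb{M}^t)$ is linear in $\mathbb{M}$, so
\begin{equation}
s^*(\mathbb{C}+\mathbb{C}^t) + i(\mathbb{C}-\mathbb{C}^t)
= \left[ s^*(\mathbb{A}+\mathbb{A}^t) + i(\mathbb{A}-\mathbb{A}^t) \right]
+ \left[ s^*(\mathbb{B}+\mathbb{B}^t) + i(\mathbb{B}-\mathbb{B}^t) \right].
\end{equation}
Since $s^* \geq s_A = \asym{\mathbb{A}}$, the first bracket is positive semi-definite by the monotonicity noted above; likewise $s^* \geq s_B = \asym{\mathbb{B}}$ makes the second bracket positive semi-definite. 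A sum of two positive semi-definite Hermitian matrices is positive semi-definite, so the left-hand side is positive semi-definite, which is exactly what we needed.

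Concretely, for any $\mathbf{z} \in \mathbb{C}^m$ one has $\mathbf{z}^\dagger[s^*(\mathbb{C}+\mathbb{C}^t) + i(\mathbb{C}-\mathbb{C}^t)]\mathbf{z} \geq 0$ by splitting the quadratic form into the two contributions and applying the defining inequality of the asymmetry index to each summand with the common value $s^*$; this shows $s^*$ lies in the set over which the minimum in (\ref{Apx Asymmetry Index}) is taken, giving $\asym{\mathbb{A}+\mathbb{B}} \leq s^* = \max\{\asym{\mathbb{A}}, \asym{\mathbb{B}}\}$. There is no serious obstacle here: the only point requiring a word of care is the monotonicity of the defining set in $s$ (used to pass from "$s = \asym{\mathbb{M}}$ works" to "every $s \geq \asym{\mathbb{M}}$ works"), and this is precisely the observation that $\mathbb{M}+\mathbb{M}^t \succeq 0$ for positive semi-definite $\mathbb{M}$. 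Everything else is linearity of the form and the fact that positive semi-definiteness is preserved under addition.
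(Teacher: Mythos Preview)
Your proof is correct and follows essentially the same approach as the paper's: both arguments use the linearity of the map $\mathbb{M}\mapsto s(\mathbb{M}+\mathbb{M}^t)+i(\mathbb{M}-\mathbb{M}^t)$ to split the quadratic form for $\mathbb{A}+\mathbb{B}$ at $s=\max\{\asym{\mathbb{A}},\asym{\mathbb{B}}\}$ into two positive semi-definite pieces. The only difference is that you spell out the monotonicity in $s$ (via $\mathbb{M}+\mathbb{M}^t\succeq 0$) that the paper leaves implicit when it asserts that both $\mathbb{J}(s)$ and $\mathbb{K}(s)$ are positive semi-definite at the common value $s$.
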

\begin{proof}
By definition \ref{Apx Asymmetry Index} the matrices
\begin{equation}
\mathbb{J}(s)\equiv
s(\mathbb{A} + \mathbb{A}^t)+i(\mathbb{A} -\mathbb{A}^t)
\quad {{\rm and}} \quad
\mathbb{K}(s)\equiv
s(\mathbb{B} + \mathbb{B}^t)+i(\mathbb{B} -\mathbb{B}^t) 
\end{equation}
with $s \equiv \max\left\{ \asym{\mathbb{A}},\asym{\mathbb{B}}
\right\}$ both are positive semi-definite. It follows that 
\begin{equation}
\mathbb{J}(s)+\mathbb{K}(s) = 
  s\left( \mathbb{A}+\mathbb{B} \right) 
+ s\left( \mathbb{A}+\mathbb{B} \right)^t 
+ i\left(\mathbb{A}+\mathbb{B}\right)
- i\left(\mathbb{A}+\mathbb{B}\right)^t
\end{equation}
is also positive semi-definite and hence $\asym{\mathbb{A}
+\mathbb{B}} \leq s$.
\end{proof}

\begin{proposition}[Dominance of principal submatrices]
\label{Apx Dominance of Principal Submatrices AI}
Let $\mathbb{A}\in\mathbb{R}^{m \times m}$ be positive 
semi-definite and $\bar{\mathbb{A}}\in\mathbb{R}^{p \times p}$ 
$(p < m)$ a principal submatrix of $\mathbb{A}$, then 
\begin{equation}
\asyms{\bar{\mathbb{A}}}\leq\asym{\mathbb{A}}.
\end{equation}
\end{proposition}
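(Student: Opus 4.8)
The plan is to unwind the definition (\ref{Apx Asymmetry Index}) of the asymmetry index and exploit the elementary fact that any principal submatrix of a positive semi-definite matrix is obtained by restricting the associated quadratic form to a coordinate subspace. Concretely, let $s \equiv \asym{\mathbb{A}}$, so that by definition the Hermitian matrix $\mathbb{M}(s) \equiv s(\mathbb{A} + \mathbb{A}^t) + i(\mathbb{A} - \mathbb{A}^t)$ is positive semi-definite on all of $\mathbb{C}^m$. Observe that the analogous matrix built from $\bar{\mathbb{A}}$, namely $\bar{\mathbb{M}}(s) \equiv s(\bar{\mathbb{A}} + \bar{\mathbb{A}}^t) + i(\bar{\mathbb{A}} - \bar{\mathbb{A}}^t)$, is precisely the principal submatrix of $\mathbb{M}(s)$ indexed by the same set of rows and columns that defines $\bar{\mathbb{A}}$ inside $\mathbb{A}$; this follows because forming the symmetric and antisymmetric parts commutes with extracting a principal submatrix.

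The key step is then the standard observation that a principal submatrix of a positive semi-definite matrix is again positive semi-definite: if $B \subset \{1,\dots,m\}$ is the index set of size $p$, then for any $\mathbf{w} \in \mathbb{C}^p$ we let $\mathbf{z} \in \mathbb{C}^m$ be its zero-extension (zeros outside the positions in $B$), and compute $\mathbf{w}^\dagger \bar{\mathbb{M}}(s)\mathbf{w} = \mathbf{z}^\dagger \mathbb{M}(s)\mathbf{z} \geq 0$. Hence $\bar{\mathbb{M}}(s)$ is positive semi-definite, which by the definition (\ref{Apx Asymmetry Index}) — since the minimum in that definition is over all $s$ for which the corresponding matrix is positive semi-definite — forces $\asyms{\bar{\mathbb{A}}} \leq s = \asym{\mathbb{A}}$. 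One should note in passing that $\bar{\mathbb{A}}$ is itself positive semi-definite (again as a principal submatrix of the positive semi-definite $\mathbb{A}$), so the asymmetry index $\asyms{\bar{\mathbb{A}}}$ is well defined and the statement makes sense.

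There is no real obstacle here; the proof is essentially a one-line remark once the right viewpoint is adopted. The only point requiring a modicum of care is verifying that the operations "take Hermitian/anti-Hermitian part and form $s(\cdot) + i(\cdot)$" and "restrict to a principal submatrix" genuinely commute, i.e. that $\bar{\mathbb{M}}(s)$ is literally the submatrix of $\mathbb{M}(s)$ and not merely a matrix of the same size. This is immediate from the entrywise description of both operations, but it is the hinge of the argument and is worth stating explicitly rather than leaving implicit.
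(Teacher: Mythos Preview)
Your proof is correct and follows essentially the same approach as the paper: both set $s=\asym{\mathbb{A}}$, observe that $s(\bar{\mathbb{A}}+\bar{\mathbb{A}}^t)+i(\bar{\mathbb{A}}-\bar{\mathbb{A}}^t)$ is the principal submatrix of the corresponding Hermitian matrix built from $\mathbb{A}$, and conclude by the inheritance of positive semi-definiteness under passage to principal submatrices. Your version is simply more explicit about the zero-extension argument and the commutation of the two operations, which the paper leaves implicit.
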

\begin{proof}
By definition \ref{Apx Asymmetry Index}
\begin{equation}
\mathbb{K}(s)\equiv \asym{\mathbb{A}}( \mathbb{A} + \mathbb{A}^t)
+i(\mathbb{A} - \mathbb{A}^t)
\end{equation}
is positive semi-definite. Consequently the matrix 
\begin{equation}
\bar{\mathbb{K}}(s)\equiv \asyms{\mathbb{A}} (\bar{\mathbb{A}} 
+ \bar{\mathbb{A}}^t )
+i(\bar{\mathbb{A}} - \bar{\mathbb{A}}^t),
\end{equation}
which constitutes a principal submatrix of $\mathbb{K}$, is also
positive semi-definite and therefore $\asyms{\bar{\mathbb{A}}}\leq
\asym{\mathbb{A}}$.
\end{proof}

\section{Bound on the asymmetry index for special classes of 
matrices}\label{Apx Bound on AI for special classes of matrices}

\begin{theorem}\label{Apx AI Permutation Matrices}
Let $\mathbb{P} \in \left\{0,1 \right\}^{m \times m}$ be a 
permutation matrix and $\id$ the identity matrix, then the matrix
$\mathbbm{1}- \mathbb{P}$ is positive semi-definite on 
$\mathbb{R}^m$ and its asymmetry index fulfils 
\begin{equation}\label{Apx AI Bistoch Mat}
\asym{\mathbbm{1} - \mathbb{P}}
\leq \cot \left( \frac{\pi}{m} \right). 
\end{equation}
\end{theorem}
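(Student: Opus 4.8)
The plan is to reduce the problem to a single cyclic permutation and then diagonalize everything explicitly. First I would dispose of the easy claim that $\id-\mathbb{P}$ is positive semi-definite on $\mathbb{R}^m$: for real $\mathbf{x}$ one has $\mathbf{x}^t(\id-\mathbb{P})\mathbf{x}=\norm{\mathbf{x}}^2-\mathbf{x}^t\mathbb{P}\mathbf{x}\geq\norm{\mathbf{x}}^2-\norm{\mathbf{x}}\,\norm{\mathbb{P}\mathbf{x}}=0$ by Cauchy--Schwarz and orthogonality of $\mathbb{P}$ (note this need not hold on $\mathbb{C}^m$, which is exactly why the asymmetry index is nontrivial here).

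Next comes the reduction. Writing the permutation as a product of disjoint cycles yields a permutation matrix $\mathbb{R}$ with $\mathbb{R}^t\mathbb{P}\mathbb{R}=C_{k_1}\oplus\cdots\oplus C_{k_r}$, where $C_k$ denotes the $k\times k$ cyclic shift and $\sum_i k_i=m$. Since $\mathbb{R}$ is orthogonal, the Hermitian matrix appearing in (\ref{Apx Asymmetry Index}) transforms under the congruence $M\mapsto\mathbb{R}^{t}M\mathbb{R}$, which preserves positive semi-definiteness, so $\asym{\id-\mathbb{P}}=\asym{(\id-C_{k_1})\oplus\cdots\oplus(\id-C_{k_r})}$; and because that defining matrix is itself block diagonal, the asymmetry index of a direct sum equals the maximum of the summands' indices. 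Hence it suffices to bound $\asym{\id-C_k}$ for $2\leq k\leq m$ (length-one cycles impose nothing).

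For the cyclic case I would use that $C_k$ is circulant, hence unitarily diagonalized by the discrete Fourier basis with eigenvalues $\lambda_j=e^{2\pi ij/k}$, $j=0,\dots,k-1$. Because $C_k^t=C_k^{-1}$, the matrix $s(2\id-C_k-C_k^t)+i(C_k^t-C_k)$ is diagonal in the same basis, with $j$-th eigenvalue $2s(1-\cos\theta_j)+2\sin\theta_j$, where $\theta_j=2\pi j/k$. This is nonnegative for every $j$ exactly when $s\geq-\cot(\theta_j/2)$ for all $j\neq0$; since $\theta\mapsto-\cot(\theta/2)$ is increasing on $(0,2\pi)$, the binding constraint is $j=k-1$, giving $\asym{\id-C_k}=-\cot(\pi(k-1)/k)=\cot(\pi/k)$. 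Finally $\cot$ is decreasing on $(0,\pi)$ and $k\leq m$, so $\cot(\pi/k)\leq\cot(\pi/m)$, and taking the maximum over the cycles yields $\asym{\id-\mathbb{P}}\leq\cot(\pi/m)$ (if $\mathbb{P}=\id$ the index is $0$, which also satisfies the bound for $m\geq2$).

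The technical heart — though nothing here is genuinely hard — is the explicit simultaneous diagonalization of $C_k$ and the associated Hermitian matrix by the Fourier basis, together with the small trigonometric observation that among the constraints $s\geq-\cot(\pi j/k)$ the one from $j=k-1$ is the strongest, i.e.\ $-\cot(\pi(k-1)/k)=\cot(\pi/k)$. The cycle reduction in the second step is the organizational key that makes the rest routine.
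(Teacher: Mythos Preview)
Your proof is correct and follows essentially the same route as the paper: cycle decomposition followed by explicit Fourier diagonalization of the cyclic blocks, with the same eigenvalue analysis yielding $\cot(\pi/k)$ for a $k$-cycle and then $\cot(\pi/k)\leq\cot(\pi/m)$. The only differences are cosmetic --- you invoke Cauchy--Schwarz for positive semi-definiteness where the paper writes out an explicit sum of squares, and you phrase the cycle reduction as an orthogonal congruence to block-diagonal form rather than via the paper's index bookkeeping.
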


\begin{proof}

We first show that $\mathbbm{1}-\mathbb{P}$ is positive 
semi-definite. To this end, we note that the matrix elements of 
$\mathbb{P}$ are given by $\left( \mathbb{P} \right)_{ij} =
\delta_{i\pi(j)}$, where $\pi\in S_m$ is the unique permutation
associated with $\mathbb{P}$ and $S_m$ the symmetric group on the
set $\left\{1,\dots,m\right\}$. Now, with $\mathbf{x} \equiv
\left(x_1, \dots,x_m \right)^t \in \mathbb{R}^m$ we have
\begin{eqnarray}
\mathbf{x}^t \left( \mathbbm{1}- \mathbb{P} \right) \mathbf{x}
& = \sum_{i,j=1}^m \left( \delta_{ij} - \delta_{i\pi(j)} \right) x_i x_j
  = \sum_{i,j=1}^m  \frac{\delta_{i\pi(j)}}{2}
    \left(x_i^2 + x_j^2 -2x_i x_j \right)\\
& = \sum_{i,j=1}^m \frac{\delta_{i\pi(j)}}{2}
    \left(x_i -x_j \right)^2 \geq 0.
\end{eqnarray}

We now turn to the second part of Theorem \ref{Apx AI Permutation
Matrices}. For any $\mathbf{z}\equiv (z_1,\dots,z_m)\in
\mathbb{C}^m$ and  $s\geq 0$, we define the quadratic form
\begin{eqnarray}
Q(\mathbf{z},s) & \equiv \mathbf{z}^{\dagger}
\left(s \left( \id-\mathbb{P} \right) +
      s \left( \id-\mathbb{P} \right)^t + 
      i \left( \id - \mathbb{P} \right) -
      i \left( \id - \mathbb{P} \right)^t \right)
\mathbf{z} \label{Apx Q Definition} \\
                & = \mathbf{z}^{\dagger} 
        \left(2s\cdot\id-(s+i)\mathbb{P} - (s-i)\mathbb{P}^t 
        \right)\mathbf{z}.
\end{eqnarray}
By definition \ref{Apx Asymmetry Index} the minimum $s$, for which
$Q(\mathbf{z},s)$ is positive semi-definite, equals the asymmetry
index of $\mathbbm{1}-\mathbb{P}$. This observation enables us to
derive an upper bound for $\asym{\mathbbm{1}-\mathbb{P}}$. To
this end, we make use of the cycle decomposition 
\begin{equation}\label{Apx Cycle Decomposition}
\pi = \left(i_1, \pi(i_1), \dots, \pi^{n_1-1}(i_1) \right) \dots
\left( i_k, \pi(i_k), \dots, \pi^{n_k -1}(i_k) \right)
\end{equation}
of $\pi$, where $i_1, \dots,i_k \in \left\{1, \dots, m \right\}$,
$\pi^l(i)$ is defined recursively by 
\begin{equation}
\pi^l(i) \equiv \pi \left( \pi^{l-1}(i) \right) 
\qquad {{\rm and }} \qquad 
\pi^0(i) =i,
\end{equation}
$k$ denotes the number of independent cycles of and $n_r$ the length
the $r^{{{\rm th}}}$ cycle. By virtue of this decomposition,  
(\ref{Apx Q Definition}) can be rewritten as 
\begin{eqnarray}
Q(\mathbf{z},s) & = \sum_{i,j=1}^m \left( 
2s\delta_{ij} - (s+i) \delta_{i\pi(j)}
              - (s-i) \delta_{\pi(i)j} \right) z_i^{\ast}z_j\\
& = \sum_{i=1}^m 2s z_i^{\ast}z_i - (s+i)z_{\pi(i)}^{\ast}z_{i}
                                  - (s-i)z_{i}^{\ast}z_{\pi(i)}\\
& = \sum_{r=1}^k \sum_{l_r=0}^{n_r-1} 2s z^{\ast}[\pi^{l_r}(i_r)]
z[\pi^{l_r}(i_r)] \nonumber \\
& \hspace*{4cm}  -(s+i)z^{\ast}[\pi^{l_r+1}(i_r)]z[\pi^{l_r}(i_r)]
 \nonumber \\
& \hspace*{4cm}  -(s-i)z^{\ast}[\pi^{l_r}(i_r)]z[\pi^{l_r+1}(i_r)],
\label{Apx Q in PermutForm}
\end{eqnarray}
where, for convenience, we introduced the notation $z[x] \equiv 
z_x$. Next, we define the vectors $\tilde{\mathbf{z}}_r\in
\mathbb{C}^{n_r}$ with elements $\left(\tilde{\mathbf{z}}_r\right)_j
\equiv z\left[\pi^{j-1}(i_r)\right]$ and the Hermitian matrices 
$\mathbb{H}_{n_r}(s)\in\mathbb{C}^{n_r \times n_r}$ with matrix
elements 
\begin{equation}\label{Apx Matrix Elements H}
\left(\mathbb{H}_{n_r}(s)  \right)_{ij}
\equiv 2 s\delta_{ij} - (s+i) \delta_{ij+1} 
- (s-i) \delta_{i+1j},
\end{equation}
where periodic boundary conditions $n_r+1=1$ for the indices $i,j
=1,\dots,n_r$ are understood. These definitions allow us to cast
(\ref{Apx Q in PermutForm}) in the rather compact form
\begin{equation}
Q(\mathbf{z},s) = \sum_{r=1}^k \tilde{\mathbf{z}}_r^{\dagger}
\mathbb{H}_{n_r}(s) \tilde{\mathbf{z}}_r.
\end{equation}
Obviously, any value of $s$ for which all the $\mathbb{H}_{n_r}(s)$
are positive semi-definite serves as a lower bound for $\asym{
\mathbbm{1}-\mathbb{P}}$. Moreover, we can calculate the eigenvalues 
of $\mathbb{H}_{n_r}(s)$ explicitly. Inserting the Ansatz $\mathbf{v}
\equiv (v_1,\dots, v_{n_r})^t\in\mathbb{C}^{n_r}$ into the eigenvalue
equation 
\begin{equation}
\mathbb{H}_{n_r}(s) \mathbf{v} = \lambda \mathbf{v}
\qquad (\lambda \in \mathbb{R})
\end{equation}
yields
\begin{equation}
\lambda v_j = 2sv_j - (s+i) v_{j-1} - (s-i) v_{j+1},
\end{equation}
where again periodic boundary conditions $v_{n_r+1}=v_1$ are
understood. This recurrence equation can be solved by standard
techniques. We put $v_j \equiv \exp \left(2\pi i\kappa j / n_r
\right)$ with $(\kappa = 1, \dots, n_r)$ and obtain the eigenvalues 
\begin{equation}
\lambda_{\kappa} = 2 \left( s- 
s\cos \left( \frac{2 \pi \kappa}{n_r} \right)
-\sin \left( \frac{2 \pi \kappa}{n_r} \right) \right). 
\end{equation}
For any fixed $s\geq 0$, the function 
\begin{equation}
f(x,s) \equiv s - s\cos x - \sin x
\end{equation}
is non-negative for $x \in [x^{\ast},2\pi]$ and strictly negative
for $x\in (0,x^{\ast})$ with
\begin{equation}
x^{\ast} \equiv \arccos \left( \frac{s^2-1}{s^2+1}
\right). 
\end{equation}
Therefore, all the eigenvalues $\lambda_{\kappa}$ of $\mathbb{H}_{
n_r}(s)$ are non-negative, if and only if
\begin{equation}\label{Apx Bound on Eigenvalues}
\frac{2 \pi}{n_r} \geq  \arccos \left(\frac{s^2-1}{s^2+1} \right).
\end{equation}
Solving (\ref{Apx Bound on Eigenvalues}) for $s$ gives the equivalent
condition
\begin{equation}
s \geq \cot \left( \frac{\pi}{n_r} \right).
\end{equation}
Since $n_r \leq m$ and therefore $2 \pi/n_r \geq 2 \pi/m$,
we can conclude that any of the $\mathbb{H}_{n_r}(s)$ is positive
semi-definite for any
\begin{equation}
s \geq \cot \left( \frac{\pi}{m} \right),
\end{equation}
thus establishing the desired result (\ref{Apx AI Bistoch Mat}). \\
\end{proof}

\begin{corollary}\label{Apx AI Doubly Stochastic Matrices}
Let $\mathbb{T} \in\mathbb{R}^{m \times m}$
be doubly stochastic, then the matrix  $\mathbbm{1}- \mathbb{T}$ 
is positive semi-definite and its asymmetry index fulfils
\begin{equation}\label{Apx B Cor1}
\asym{\mathbbm{1} - \mathbb{T}} \leq
\cot \left( \frac{\pi}{m} \right).
\end{equation}
\end{corollary}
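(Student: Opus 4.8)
The plan is to reduce the statement to the permutation-matrix case already settled in Theorem~\ref{Apx AI Permutation Matrices} by invoking the Birkhoff--von Neumann theorem. Since $\mathbb{T}$ is doubly stochastic, it admits a representation as a convex combination of permutation matrices,
\begin{equation}
\mathbb{T} = \sum_{k=1}^{N} \lambda_k \mathbb{P}_k,
\qquad \lambda_k > 0, \quad \sum_{k=1}^{N}\lambda_k = 1,
\end{equation}
where the $\mathbb{P}_k \in \{0,1\}^{m\times m}$ are permutation matrices and any vanishing weights have been discarded. Subtracting from the identity gives $\id - \mathbb{T} = \sum_{k=1}^{N}\lambda_k(\id - \mathbb{P}_k)$, i.e.\ a convex combination of the matrices $\id - \mathbb{P}_k$.

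Positive semi-definiteness of $\id - \mathbb{T}$ is then immediate: by Theorem~\ref{Apx AI Permutation Matrices} each $\id - \mathbb{P}_k$ is positive semi-definite, hence so is every nonnegative linear combination of them.

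For the bound on the asymmetry index I would combine the scaling invariance $\asym{\lambda\mathbb{A}} = \asym{\mathbb{A}}$ for $\lambda>0$ (from the basic properties of the asymmetry index) with the convexity estimate of Proposition~\ref{Apx Convexity AI}. The scaling property lets the weights $\lambda_k$ be absorbed, $\asym{\lambda_k(\id-\mathbb{P}_k)} = \asym{\id-\mathbb{P}_k}$, and iterating $\asym{\mathbb{A}+\mathbb{B}} \leq \max\{\asym{\mathbb{A}}, \asym{\mathbb{B}}\}$ over the $N$ summands (a straightforward induction on $N$) yields
\begin{equation}
\asym{\id - \mathbb{T}}
= \asym{\sum_{k=1}^{N}\lambda_k(\id-\mathbb{P}_k)}
\leq \max_{1\leq k\leq N}\asym{\id - \mathbb{P}_k}.
\end{equation}
Applying Theorem~\ref{Apx AI Permutation Matrices} once more, $\asym{\id-\mathbb{P}_k}\leq\cot(\pi/m)$ for each $k$, and taking the maximum gives~(\ref{Apx B Cor1}).

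The main step --- indeed essentially the only nontrivial one --- is the appeal to Birkhoff--von Neumann; once the decomposition into permutation matrices is in hand, both claims follow formally from results already established. The only point requiring a little care is that Proposition~\ref{Apx Convexity AI} is phrased for two summands, so the induction over the $N$ terms should be spelled out, and terms with zero weight are simply omitted from the decomposition so that the scaling property, valid only for $\lambda>0$, always applies.
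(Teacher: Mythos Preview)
Your proof is correct and follows essentially the same route as the paper: invoke the Birkhoff--von Neumann decomposition, conclude positive semi-definiteness from Theorem~\ref{Apx AI Permutation Matrices}, and then combine Proposition~\ref{Apx Convexity AI} with Theorem~\ref{Apx AI Permutation Matrices} to obtain the bound. The only difference is that you spell out the scaling step and the induction over the $N$ summands, which the paper leaves implicit.
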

\begin{proof}
The Birkhoff-theorem (see p. 549 in \cite{Horn1985}) states
that for any doubly stochastic matrix $\mathbb{T}\in\mathbb{R}^{ 
m\times m}$ there is a finite number of permutation matrices
$\mathbb{P}_1, \dots \mathbb{P}_N \in \{0,1 \}^{m \times m}$ and
positive scalars $\lambda_1, \dots, \lambda_N \in \mathbb{R}$ 
such that 
\begin{equation}
\sum_{k=1}^N \lambda_k=1 
\qquad {{\rm and }} \qquad
\sum_{k=1}^{N} \lambda_k \mathbb{P}_k =\mathbb{T}.
\end{equation}
Hence, we have 
\begin{equation}\label{Apx Decomposition T}
\mathbbm{1}- \mathbb{T} =\sum_{k=1}^N 
\lambda_k \left( \mathbbm{1} - \mathbb{P}_k \right)
\end{equation}
and consequently $\id - \mathbb{T}$ must be positive 
semi-definite by virtue of Theorem
\ref{Apx AI Permutation Matrices}. Furthermore, using
Proposition \ref{Apx Convexity AI} and again Theorem 
\ref{Apx AI Permutation Matrices} gives 
the bound (\ref{Apx B Cor1}).
\end{proof}

\begin{theorem}\label{Apx AI Partial Permutation Matrices}
Let $\bar{\mathbb{P}}\in \left\{0,1\right\}^{m 
\times m}$ be a partial permutation matrix, i.e., any row and 
column of $\bar{\mathbb{P}}$ contains at most one non-zero entry
and all of these non-zero entries are $1$. Then, the matrix
$\id - \bar{\mathbb{P}}$ is positive semi-definite  and its 
asymmetry index fulfils
\begin{equation}\label{Apx AI Partial Permutation Matrices 2}
\asym{ \id - \bar{\mathbb{P}}}
\leq \cot \left( \frac{\pi}{m+1} \right).
\end{equation}
\end{theorem}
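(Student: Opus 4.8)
The plan is to reduce the claim to Theorem \ref{Apx AI Permutation Matrices} via a combinatorial normal form for $\bar{\mathbb{P}}$. Drawing an arrow $j\to i$ whenever $(\bar{\mathbb{P}})_{ij}=1$, the resulting digraph has in- and out-degree at most one at every vertex, hence decomposes into vertex-disjoint simple cycles and simple (open) paths. Relabelling the $m$ indices so that the vertices of each component are consecutive and appear in cyclic, resp.\ path, order amounts to conjugating $\bar{\mathbb{P}}$ by a permutation matrix $\mathbb{Q}$; since $\mathbb{Q}$ is orthogonal and $\id-\mathbb{Q}\bar{\mathbb{P}}\mathbb{Q}^t=\mathbb{Q}(\id-\bar{\mathbb{P}})\mathbb{Q}^t$, this changes neither positive semi-definiteness nor the asymmetry index. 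Thus I may assume that $\id-\bar{\mathbb{P}}$ is block diagonal, with a block $\id_{n_r}-\mathbb{C}_{n_r}$ for each cycle of length $n_r$ ($\mathbb{C}_{n_r}$ the cyclic shift) and a block $\id_{\ell_r}-\mathbb{S}_{\ell_r}$ for each path on $\ell_r$ vertices ($\mathbb{S}_{\ell_r}$ the lower shift matrix).

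The decisive point is to embed each block into a genuine permutation matrix. A cyclic block is already of the form $\id-\mathbb{P}$ for the $n_r$-cycle $\mathbb{P}$, so Theorem \ref{Apx AI Permutation Matrices} directly gives positive semi-definiteness together with $\asym{\id_{n_r}-\mathbb{C}_{n_r}}\leq\cot(\pi/n_r)$. For a path $v_1\to\cdots\to v_{\ell_r}$ let $\mathbb{P}'$ be the $(\ell_r+1)$-cycle on $\{v_1,\dots,v_{\ell_r},w\}$ obtained by adjoining a new vertex $w$ with arrows $v_{\ell_r}\to w\to v_1$; a direct inspection shows that $\id_{\ell_r}-\mathbb{S}_{\ell_r}$ is precisely the principal submatrix of $\id-\mathbb{P}'$ obtained by deleting the row and column indexed by $w$. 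Positive semi-definiteness then follows from Theorem \ref{Apx AI Permutation Matrices} and the fact that principal submatrices of positive semi-definite matrices are positive semi-definite, while Proposition \ref{Apx Dominance of Principal Submatrices AI} and Theorem \ref{Apx AI Permutation Matrices} yield $\asym{\id_{\ell_r}-\mathbb{S}_{\ell_r}}\leq\asym{\id-\mathbb{P}'}\leq\cot(\pi/(\ell_r+1))$. (The degenerate cases $n_r\in\{1,2\}$ and $\ell_r\in\{0,1\}$ give symmetric or zero blocks, hence asymmetry index $0$, and are harmless.)

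It remains to assemble the blocks. Once positive semi-definiteness of every block is known, the asymmetry index of the block-diagonal matrix $\id-\bar{\mathbb{P}}$ equals the maximum of the asymmetry indices of its blocks: the inequality ``$\geq$'' is Proposition \ref{Apx Dominance of Principal Submatrices AI}, and ``$\leq$'' follows by writing $\id-\bar{\mathbb{P}}$ as a sum of matrices each supported on a single block and iterating Proposition \ref{Apx Convexity AI}. Since every cycle has $n_r\leq m$ and every path has $\ell_r\leq m$ vertices, and $\cot$ is decreasing on $(0,\pi)$, each block contributes at most $\cot(\pi/(m+1))$, which establishes (\ref{Apx AI Partial Permutation Matrices 2}).

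I expect the only delicate part to be the bookkeeping in the embedding step: verifying that the open-chain block really occurs as the stated principal submatrix of the $(\ell_r+1)$-cycle, and noticing that the extremal configuration is a single path running through all $m$ vertices, which is exactly what forces $m+1$ rather than $m$ in the denominator (and makes the bound tight, as the $2\times2$ example $\id-\mathbb{S}_2=\bigl(\begin{smallmatrix}1&0\\-1&1\end{smallmatrix}\bigr)$ with $\asym{\id-\mathbb{S}_2}=1/\sqrt{3}=\cot(\pi/3)$ shows). As an alternative one could bypass the embedding entirely and repeat the eigenvalue computation of Theorem \ref{Apx AI Permutation Matrices} for the non-periodic tridiagonal matrices $\mathbb{H}^{\mathrm{open}}_{\ell_r}(s)$ attached to the open chains; these have eigenvalues $2s+2\sqrt{s^2+1}\,\cos(k\pi/(\ell_r+1))$ for $k=1,\dots,\ell_r$, which are all non-negative precisely when $s\geq\cot(\pi/(\ell_r+1))$, giving the same conclusion.
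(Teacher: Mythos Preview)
Your argument is correct. The core idea---close each open path into a cycle by adjoining one extra vertex and then invoke Theorem \ref{Apx AI Permutation Matrices}---is the same as in the paper, but the organization differs. The paper adjoins a \emph{single} auxiliary index $m+1$ that simultaneously closes \emph{all} incomplete cycles: it forms the bordered matrix
\[
\mathbb{B}=\begin{pmatrix}\bar{\mathbb{P}} & \mathbf{a}\\ \mathbf{b}^t & 1-m+q\end{pmatrix},
\]
shows $\id-\mathbb{B}$ is positive semi-definite by a direct quadratic-form estimate, and then decomposes the full quadratic form $\bar{Q}(\mathbf{z},s)$ into contributions from the complete cycles ($\mathbb{H}_{n_r}(s)$) and the now-closed incomplete cycles ($\mathbb{H}_{n_{\bar r}+1}(s)$), together with a harmless remainder $2s\sum_{i\in A\cap B}|z_i-z_{m+1}|^2$. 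You instead first conjugate to block-diagonal form and then close each path \emph{locally} with its own new vertex, combining the blocks via Propositions \ref{Apx Convexity AI} and \ref{Apx Dominance of Principal Submatrices AI}. Your route is more modular and avoids the somewhat heavy bookkeeping of the paper's global decomposition; the paper's route has the minor aesthetic advantage of needing only one extra coordinate, which mirrors the fact that a doubly substochastic $m\times m$ matrix can be embedded as a principal submatrix of an $(m+1)\times(m+1)$ doubly stochastic matrix. Your alternative via the explicit spectrum $2s+2\sqrt{s^2+1}\cos\bigl(k\pi/(\ell_r+1)\bigr)$ of the open tridiagonal block is also correct and gives the sharpest block-wise statement without any embedding at all.
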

\begin{proof}
Let $q$ be the number of non-vanishing entries of
$\bar{\mathbb{P }}$. If $q=0$, $\bar{\mathbb{P}}$ equals the zero
matrix and there is nothing to prove. If $q = m$, $\bar{\mathbb{P
}}$ itself must be a permutation matrix and Lemma 
\ref{Apx AI Permutation Matrices} provides that $\id-
\bar{\mathbb{P}}$ is positive semi-definite as well as the bound 
\begin{equation}
\asym{\id -\bar{\mathbb{P}}}\leq\cot\left( \frac{\pi}{m} \right),
\end{equation}
which is even stronger than
(\ref{Apx AI Partial Permutation Matrices 2}). If 
$0<q<m$, there are two index sets $A\subset \{1,\dots, m\}$
and $B\subset \{1,\dots, m\}$ of equal cardinality $m-q$, such
that the rows of $\bar{\mathbb{P}}$ indexed by $A$ and the
columns of $\bar{\mathbb{P}}$ indexed by $B$ contain only 
zero entries. Clearly, in this case, $\bar{\mathbb{P}}$ is not a 
permutation matrix. Nevertheless, we can define a bijective map 
\begin{equation}
\bar{\pi}: \{1, \dots,m\} \setminus B
\rightarrow 
\{1,\dots,m\}\setminus A
\end{equation}
in such a way that $\bar{\mathbb{P}}$ can be regarded as a 
representation of $\bar{ \pi}$ . To this end, we denote by 
$\{\mathbf{e}_1,\dots, \mathbf{e}_{m}\}$ the canonical basis of
 $\mathbb{R}^{m}$ and define $\bar{\pi}: \; i \mapsto\bar{\pi}(i)$ 
such that
\begin{equation}\label{Apx Definition PiBar}
\bar{\mathbb{P}} \mathbf{e}_i \equiv \mathbf{e}_{\bar{\pi}(i)}.
\end{equation}
This definition naturally leads to the cycle decomposition 
\begin{eqnarray}
\bar{\pi}= &  \left(i_1, \bar{\pi}(i_1), \dots, 
\bar{\pi}^{n_1 -1} (i_1) \right) 
\cdots
\left( i_k, \bar{\pi}(i_{k}), \dots,
\bar{\pi}^{n_{k}-1}(i_{k}) \right) \nonumber \\
& \left[j_1, \bar{\pi}(j_1), \dots, 
\bar{\pi}^{\bar{n}_1 -1} (j_1) \right]
\cdots
\left[j_{\bar{k}}, \bar{\pi}(j_{\bar{k}}), \dots,
\bar{\pi}^{\bar{n}_{\bar{k}}-1}(j_{\bar{k}}) \right].
\label{Apx Cycle Decomposition PiBar}
\end{eqnarray}
Here, we introduced two types of cycles. The ones in round
brackets, which we will term complete, are just ordinary
permutation cycles, which close by virtue of the condition
$\bar{\pi}^{n_r} (i_r) =i_r$ and therefore must be contained 
completely in the set
\begin{equation}
I \equiv \left( \{1,\dots,m\}\setminus B \right) \cap
\left( \{1,\dots,m\}\setminus A \right)
=\{1, \dots, m\}\setminus \left(A \cup B \right).
\end{equation}
The cycles in rectangular brackets, which will be termed  
incomplete, do not close, but begin with a certain $j_{\bar{r}}$ 
taken from the set
\begin{equation}
D \equiv \left( \{1, \dots, m\} \setminus B \right) \setminus
\left(\{1, \dots, m\} \setminus A \right) = 
A \setminus B
\end{equation}
and terminate after $\bar{n}_{\bar{r}} -1$ iterations with
$\bar{\pi}^{\bar{n}_{\bar{r}}-1}(j_{\bar{r}})$, which is 
contained in 
\begin{equation}
R \equiv \left( \{1, \dots, m \}\setminus A \right)\setminus
\left( \{1, \dots, m\} \setminus B \right)
= B\setminus A.
\end{equation}
\begin{figure}[t]
\centering
\epsfig{file=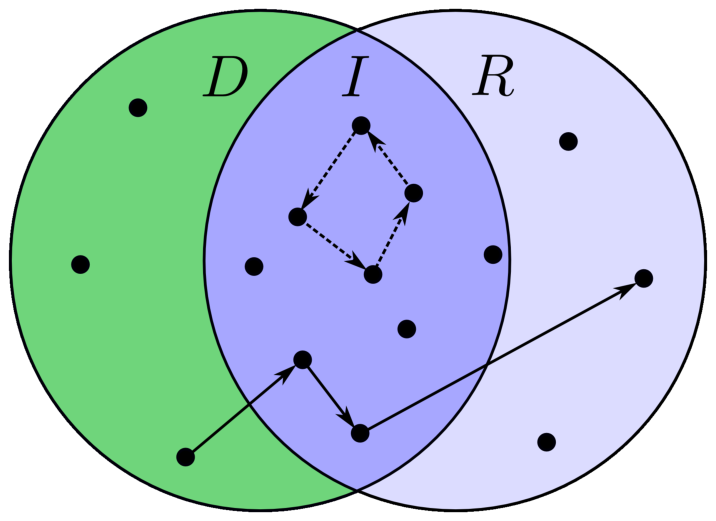, scale=0.7}
\caption{Schematic illustration of the cycle decomposition 
(\ref{Apx Cycle Decomposition PiBar}). The green circle
represents the set $\{1,\dots,m\}\setminus B$, the blue one 
the set $\{1,\dots,m\}\setminus A$. The black dots symbolize 
the elements of the respective sets and the arrows show the
action of the map $\bar{\pi}$. While the dashed arrows form a 
complete cycle, the solid ones combine to an incomplete cycle.
 \label{Fig_Cycle_Sets}}
\end{figure}
Figure (\ref{Fig_Cycle_Sets}) shows a schematic visualization of
 the two different
types of cycles. We note that, since the map $\bar{\pi}$, is 
bijective the cycle decomposition 
\ref{Apx Cycle Decomposition PiBar} is unique up to the choice 
of the $i_r$ and any element of
\begin{equation}
J \equiv \left( \{1, \dots,m\} \setminus B \right) \cup 
\left( \{1, \dots,m\} \setminus A \right)
= \{1, \dots, m\}\setminus\left( A \cap B \right)
\end{equation} shows up exactly once.\\

For the next step, we introduce the vectors 
\begin{equation}
\mathbf{a} \equiv \sum_{i \in A} \mathbf{e}_i 
\qquad {{\rm and}} \qquad
\mathbf{b} \equiv \sum_{i \in B} \mathbf{e}_i,
\end{equation}
as well as the bordered matrix 
\begin{equation}
\mathbb{B} \equiv \left( \! \begin{array}{cc}
\bar{\mathbb{P}} & \mathbf{a}\\
\mathbf{b}^t     & 1-m+q
\end{array}\! \right).
\end{equation}
Obviously, all rows and columns of $\mathbb{B}$ sum up to $1$ and 
all off-diagonal entries are non-negative. Hence, with
$B_{ij}\equiv\left(\mathbb{B}\right)_{ij}$, we have for any
$\mathbf{x}\in\mathbb{R}^m$
\begin{eqnarray}
\mathbf{x}^t\left(\id-\mathbb{B}\right)\mathbf{x} & = 
\sum_{i,j=1}^m \left(\delta_{ij}-B_{ij}\right)x_ix_j =
\sum_{i,j=1}^m \frac{B_{ij}}{2}\left(x_i^2 +x_j^2-2x_ix_j\right)
\nonumber\\
&=\sum_{i,j=1,\;i\neq j}^m \frac{B_{ij}}{2}\left(x_i-x_j\right)^2
\geq 0, \label{Apx PSD B}
\end{eqnarray}
i.e., the matrix $\id-\mathbb{B}$ is positive semi-definite.
Since $\id-\bar{\mathbb{P}}$ is a principal submatrix of
$\id-\mathbb{B}$, (\ref{Apx PSD B}) implies in particular that
$\id-\bar{\mathbb{P}}$ is positive semi-definite, thus establishing 
the first part of Lemma \ref{Apx AI Partial Permutation Matrices}.\\

We will now prove the bound 
(\ref{Apx AI Partial Permutation Matrices 2}) on the asymmetry 
index of $\id-\bar{\mathbb{P}}$. To this end, for any $\mathbf{z}
\in \mathbb{C}^{m+1}$ we associate the matrix $\mathbb{B}$ with 
the quadratic form
\begin{eqnarray}
\bar{Q}(\mathbf{z},s) & \equiv
\mathbf{z}^{\dagger} \left(s(\id-\mathbb{B})+ s(\id-\mathbb{B})^t
+i (\id-\mathbb{B}) - i(\id-\mathbb{B})^t \right) \mathbf{z}\\
& = \mathbf{z}^{\dagger} \left(
2s \cdot \id - (s +i)\mathbb{B} - (s-i) \mathbb{B}^t
\right) \mathbf{z}.\label{Apx Definition QBar}
\end{eqnarray}
and notice that the minimum $s$ for which $\bar{Q}(\mathbf{z
},s)$ is positive semi-definite equals the asymmetry index of 
$\id-\mathbb{B}$. Furthermore, since $\id-\bar{\mathbb{P}}$ is
a principal submatrix of $\id -\mathbb{B}$, Proposition 
\ref{Apx Dominance of Principal Submatrices AI} implies that
this particular value of $s$ is also an upper bound on the 
asymmetry index of $\id-\bar{\mathbb{P}}$. Now, by inserting the
decomposition 
\begin{equation}
\mathbf{z} \equiv \sum_{i=1}^{m+1} z_i \mathbf{e}_i
\end{equation}
into (\ref{Apx Definition QBar}) while keeping in mind the
definition (\ref{Apx Definition PiBar}), we obtain 
\begin{eqnarray}
\bar{Q}(\mathbf{z},s) & = 2 s\sum_{i=1}^{m} z_i^{\ast}z_i
+2 s (m-q) z^{\ast}_{m+1}z_{m+1}
 \nonumber \\
& \quad - (s+i)\left(\sum_{i \in \{1,\dots,m\} \setminus B}
z^{\ast}_{\bar{\pi}(i)} z_i + \sum_{i \in A} z_i^{\ast}
z_{m+1} + \sum_{i \in B} z_{m+1}^{\ast}z_i 
\right) \nonumber \\
& \quad - (s-i)\left(\sum_{i \in \{1,\dots,m\}\setminus B}
z^{\ast}_i z_{\bar{\pi}(i)} + \sum_{i \in A} z_{m+1}^{\ast}
z_i     + \sum_{i \in B} z_{i}^{\ast} z_{m+1}
\right) \label{Apx QBar Elements}.
\end{eqnarray}
By realizing 
\begin{equation}
A = D \cup \left( A \cap B \right),
\quad 
B  = R \cup \left( A \cap B \right),
\quad 
\{1, \dots,m\} = J \cup \left( A \cap B \right)
\end{equation}
and making use of the cycle decomposition 
(\ref{Apx Cycle Decomposition PiBar}), we can rewrite
(\ref{Apx QBar Elements}) as
\begin{eqnarray}
& \bar{Q}(\mathbf{z},s) = 
2s\sum_{r=1}^k \sum_{l_r=0}^{n_r -1} 
z^{\ast}[\bar{\pi}^{l_r}(i_r)]z[\bar{\pi}^{l_r}(i_r)]
+2s\sum_{{\bar{r}}=1}^{\bar{k}} \sum_{l_{\bar{r}}=0}^{n_{\bar{r}}-1}
z^{\ast}[\bar{\pi}^{l_{\bar{r}}}(j_{\bar{r}})]
z[\bar{\pi}^{l_{\bar{r}}}(j_{\bar{r}})] \nonumber\\
& + 2s\sum_{i \in A\cap B} z_i^{\ast}z_i
+ 2s(m-q) z^{\ast}_{m+1}z_{m+1} 
\nonumber \\
& -(s+i) \left( \sum_{r=1}^k \sum_{l_r=0}^{n_r -1} 
z^{\ast}[\bar{\pi}^{l_r+1}(i_r)]z[\bar{\pi}^{l_r}(i_r)]
+\sum_{\bar{r}=1}^{\bar{k}} \sum_{l_{\bar{r}}=0}^{n_{\bar{r}}-2}
z^{\ast}[\bar{\pi}^{l_{\bar{r}}+1}(j_{\bar{r}})]
z[\bar{\pi}^{l_{\bar{r}}}(j_{\bar{r}})] \right. \nonumber\\
& \left. \hspace*{3.0cm}
         + \sum_{i \in D} z_i^{\ast} z_{m+1} 
         + \sum_{i \in R} z^{\ast}_{m+1} z_i 
         + \sum_{i \in A \cap B} \left(
           z_i^{\ast}z_{m+1} + z_{m+1}^{\ast} z_i \right)
\right) \nonumber\\
& -(s-i) \left( \sum_{r=1}^k \sum_{l_r=0}^{n_r -1} 
z^{\ast}[\bar{\pi}^{l_r}(i_r)]z[\bar{\pi}^{l_r+1}(i_r)]
+\sum_{\bar{r}=1}^{\bar{k}} \sum_{l_{\bar{r}}=0}^{n_{\bar{r}}-2}
z^{\ast}[\bar{\pi}^{l_{\bar{r}}}(j_{\bar{r}})]
z[\bar{\pi}^{l_{\bar{r}}+1}(j_{\bar{r}})] \right. \nonumber\\
& \left. \hspace*{3.0cm}
         + \sum_{i \in D} z_{m+1}^{\ast} z_{i} 
         + \sum_{i \in R} z^{\ast}_{i} z_{m+1} 
         + \sum_{i \in A \cap B} \left(
           z_{m+1}^{\ast} z_i +  z_i^{\ast}z_{m+1}  \right)
\right), \nonumber\\
\label{Apx QBar in PBar representation}
\end{eqnarray}
thus explicitly separating contributions from complete and
incomplete cycles. Finally, since we have
\begin{eqnarray}
& \sum_{i \in D} z_i^{\ast}z_{m+1} = \sum_{\bar{r}=1}^{\bar{k}} 
z^{\ast}[j_{\bar{r}}]z_{m+1}, \quad
& \sum_{i \in R} z_{m+1}^{\ast}z_i = \sum_{\bar{r}=1}^{\bar{k}}
z^{\ast}_{m+1}
z[\bar{\pi}^{n_{\bar{r}}-1}(j_{\bar{r}})],\\
& \sum_{i \in D} z_{m+1}^{\ast} z_i = \sum_{\bar{r}=1}^{\bar{k}}
z^{\ast}_{m+1}  z[j_{\bar{r}}], \quad
& \sum_{i \in R} z_i^{\ast}z_{m+1} = \sum_{\bar{r}=1}^{\bar{k}}
z^{\ast}[\bar{\pi}^{n_{\bar{r}}-1}_{\bar{r}}(j_{\bar{r}})]
z_{m+1},
\end{eqnarray}
by employing the definitions 
\begin{eqnarray}
\tilde{\mathbf{z}}_r & \equiv \left( 
z[i_r], z[\bar{\pi}(i_r)], \dots, z[\bar{\pi}^{n_r-1}(i_r)]
\right)^t \in \mathbb{C}^{n_r \times n_r}
\qquad {{\rm and }} \qquad\\
\tilde{\bar{\mathbf{z}}}_{\bar{r}}  & \equiv \left(
z[j_{\bar{r}}], z[\bar{\pi}(j_{\bar{r}})], \dots, 
z[\bar{\pi}^{n_{\bar{r}}-1}(j_{\bar{r}})], z_{m+1} \right)^t
\in \mathbb{C}^{(n_{\bar{r}}+1)\times (n_{\bar{r}}+1)},
\end{eqnarray}
(\ref{Apx QBar in PBar representation}) can be written as
\begin{eqnarray}\label{Apx BarQ in Vector Form}
\bar{Q}(\mathbf{z},s) & = 
\sum_{r=1}^k \tilde{\mathbf{z}}_r^{\dagger} \mathbb{H}_{n_r}(s)
\tilde{\mathbf{z}}_r 
+\sum_{\bar{r}=1}^{\bar{k}} \tilde{\bar{\mathbf{z}}}_{\bar{r}}
^{\dagger} \mathbb{H}_{n_{\bar{r}}+1}(s) \tilde{\bar{\mathbf{z}}}
_{\bar{r}} +2s\sum_{i \in A \cap B} \left\vert
z_i - z_{m+1} \right\vert^2,
\end{eqnarray}
where the matrices $\mathbb{H}_{n}(s)$  are defined in (\ref{Apx
Matrix Elements H}). Since we have already shown for the proof of
Lemma \ref{Apx AI Permutation Matrices}
that $\mathbb{H}_n(s)$ is positive semi-definite for any 
$s \geq \cot \left( \pi / n \right)$, we immediately infer from 
(\ref{Apx BarQ in Vector Form}) that $\bar{Q}(\mathbf{z},s)$ is 
positive semi-definite for any 
\begin{equation}
s\geq\cot \left(\frac{\pi}{{{\rm max}} \{n_r,n_{\bar{r}}+1 \}}
\right).
\end{equation}
Since $\max \{n_r,n_{\bar{r}}+1\}\leq m+1$, we finally end up with
\begin{equation}
\asym{\id - \bar{\mathbb{P}}} \leq
\asym{\id - \mathbb{B}}       \leq
\cot \left( \frac{\pi}{m+1} \right).
\end{equation}
\end{proof}

\begin{corollary}\label{Apx AI Doubly Substochastic Matrices}
Let $\bar{\mathbb{T}}\in\mathbb{R}^{m 
\times m}$ be doubly substochastic, then the matrix
$\id - \bar{\mathbb{T}}$ is positive semi-definite and its 
asymmetry index fulfils
\begin{equation}
\asym{\id - \bar{\mathbb{T}}} \leq
\cot \left( \frac{\pi}{m+1} \right).
\end{equation}
\end{corollary}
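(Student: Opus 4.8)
The plan is to follow the pattern of the proof of Corollary~\ref{Apx AI Doubly Stochastic Matrices}, with Birkhoff's decomposition replaced by its analogue for doubly substochastic matrices. The key input I would establish first is that every doubly substochastic $\bar{\mathbb{T}}\in\mathbb{R}^{m\times m}$ admits a representation
\begin{equation}\label{plan ss decomp}
\bar{\mathbb{T}}=\sum_{k=1}^{N}\lambda_{k}\bar{\mathbb{P}}_{k},\qquad\lambda_{k}>0,\quad\sum_{k=1}^{N}\lambda_{k}=1,
\end{equation}
where each $\bar{\mathbb{P}}_{k}\in\{0,1\}^{m\times m}$ is a partial permutation matrix in the sense of Theorem~\ref{Apx AI Partial Permutation Matrices}. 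Once (\ref{plan ss decomp}) is available, the statement follows immediately: writing $\id-\bar{\mathbb{T}}=\sum_{k}\lambda_{k}(\id-\bar{\mathbb{P}}_{k})$ and using Theorem~\ref{Apx AI Partial Permutation Matrices}, each $\lambda_{k}(\id-\bar{\mathbb{P}}_{k})$ is positive semi-definite, hence so is $\id-\bar{\mathbb{T}}$; and combining the scaling invariance $\asym{\lambda\mathbb{A}}=\asym{\mathbb{A}}$ for $\lambda>0$ with the convexity bound of Proposition~\ref{Apx Convexity AI}, applied inductively to the $N$ terms, gives
\begin{equation}
\asym{\id-\bar{\mathbb{T}}}\leq\max_{1\leq k\leq N}\asym{\id-\bar{\mathbb{P}}_{k}}\leq\cot\left(\frac{\pi}{m+1}\right),
\end{equation}
the last step being again Theorem~\ref{Apx AI Partial Permutation Matrices}.

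To obtain (\ref{plan ss decomp}) I would dilate $\bar{\mathbb{T}}$ to a genuinely doubly stochastic matrix of order $2m$ and then invoke the Birkhoff theorem there. Let $r_{i}\equiv1-\sum_{j}(\bar{\mathbb{T}})_{ij}\geq0$ and $c_{j}\equiv1-\sum_{i}(\bar{\mathbb{T}})_{ij}\geq0$ be the row and column deficiencies, and put $d\equiv\sum_{i}r_{i}=\sum_{j}c_{j}$. If $d=0$ then $\bar{\mathbb{T}}$ is already doubly stochastic and Corollary~\ref{Apx AI Doubly Stochastic Matrices} gives an even sharper bound; if $m=d$ then $\bar{\mathbb{T}}=0$ and the claim is trivial; otherwise set
\begin{equation}
\mathbb{S}\equiv\left( \! \begin{array}{cc}\bar{\mathbb{T}}&\frac{1}{d}\,\mathbf{r}\,\mathbf{c}^{t}\\\frac{1}{d}\,\mathbf{r}\,\mathbf{c}^{t}&\mathbb{M}\end{array} \! \right),
\end{equation}
with $\mathbf{r}\equiv(r_{1},\dots,r_{m})^{t}$, $\mathbf{c}\equiv(c_{1},\dots,c_{m})^{t}$, and $\mathbb{M}\in\mathbb{R}^{m\times m}$ any nonnegative matrix with row sums $1-r_{i}$ and column sums $1-c_{j}$ (such a matrix exists because these numbers are nonnegative and $\sum_{i}(1-r_{i})=\sum_{j}(1-c_{j})=m-d$). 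A direct check of the $2m$ row sums and $2m$ column sums shows that $\mathbb{S}$ is doubly stochastic, so Birkhoff yields $\mathbb{S}=\sum_{k}\lambda_{k}\mathbb{Q}_{k}$ with permutation matrices $\mathbb{Q}_{k}\in\{0,1\}^{2m\times2m}$ and $\lambda_{k}>0$, $\sum_{k}\lambda_{k}=1$. Restricting to the leading $m\times m$ principal block on both sides produces (\ref{plan ss decomp}), with $\bar{\mathbb{P}}_{k}$ the corresponding block of $\mathbb{Q}_{k}$, which is a partial permutation matrix since a submatrix of a permutation matrix has at most one unit entry in each row and column. Alternatively, one may simply quote the known fact that the partial permutation matrices are precisely the extreme points of the convex set of doubly substochastic matrices.

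I expect the only mildly delicate point to be the bookkeeping in the dilation step, namely checking that the off-diagonal and lower-right blocks have been chosen so that every deficient row and column of $\bar{\mathbb{T}}$ is exactly compensated and $\mathbb{S}$ is doubly stochastic, together with disposing of the degenerate cases $d=0$ and $m=d$. After that, everything reduces to Theorem~\ref{Apx AI Partial Permutation Matrices}, Proposition~\ref{Apx Convexity AI} and the elementary scaling property of the asymmetry index, exactly as in the doubly stochastic case.
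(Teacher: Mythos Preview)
Your proposal is correct and follows essentially the same route as the paper: write $\id-\bar{\mathbb{T}}$ as a convex combination of $\id-\bar{\mathbb{P}}_k$ with partial permutation matrices $\bar{\mathbb{P}}_k$, then invoke Theorem~\ref{Apx AI Partial Permutation Matrices} together with Proposition~\ref{Apx Convexity AI}. The only difference is that the paper simply cites Horn--Johnson for the convex decomposition of doubly substochastic matrices, whereas you sketch a (correct) proof of it via a $2m\times 2m$ doubly stochastic dilation and Birkhoff.
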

\begin{proof}
It can be shown that any doubly substochastic 
matrix is the convex combination of a finite number of partial
permutation matrices $\bar{\mathbb{P}}_k$ (see p. 165 in
\cite{Horn1991}), i.e., we have 
\begin{equation}
\bar{ \mathbb{T}} = \sum_{k=1}^N \lambda_k \bar{\mathbb{P}}_k
\end{equation}
with 
\begin{equation}
\lambda_k >0 
\qquad {{\rm and}} \qquad
\sum_{k=1}^N \lambda_k =1.
\end{equation}
Consequently, it follows 
\begin{equation}
\id - \bar{\mathbb{T}} = \sum_{k=1}^N \lambda_k \left( \id - 
\bar{\mathbb{P}} \right).
\end{equation}
Using the same argument with Lemma 
\ref{Apx AI Partial Permutation Matrices} instead of Lemma
\ref{Apx AI Permutation Matrices} in the proof of Corollary 
\ref{Apx AI Doubly Stochastic Matrices} completes the proof
of Corollary \ref{Apx AI Doubly Substochastic Matrices} .
\end{proof}

\section{Bound on the asymmetry index of the Schur complements}
\label{Apx AI Schur complements}

For $\mathbb{A}\in\mathbb{C}^{m\times m}$ partitioned as 
\begin{equation}\label{Apx Partitioning SC}
\mathbb{A}\equiv \left( \! \begin{array}{cc}
\mathbb{A}_{11} & \mathbb{A}_{12}\\
\mathbb{A}_{21} & \mathbb{A}_{22}
\end{array} \! \right)
\end{equation}
with non-singular $\mathbb{A}_{22}\in\mathbb{R}^{p\times p}$, the
Schur complement of $\mathbb{A}_{22}$ in $\mathbb{A}$ is defined 
by (see p. 18 in \cite{Zhang2005})
\begin{equation}
\mathbb{A}/\mathbb{A}_{22} = \mathbb{A}_{11} - \mathbb{A}_{12}
\mathbb{A}_{22}^{-1}\mathbb{A}_{21}.
\end{equation}
Regarding the asymmetry index, we have the following proposition.

\begin{proposition}[Dominance of the Schur complement]
\label{Apx Dominance of Schur Complement}
Let $\mathbb{A}\in\mathbb{R}^{m\times m}$ be a positive 
semi-definite matrix partitioned as in
(\ref{Apx Partitioning SC}), where $\mathbb{A}_{22}\in
\mathbb{R}^{p\times p}$ is non-singular, then the matrix
$\mathbb{A}/\mathbb{A}_{22}$ is positive semi-definite and
its asymmetry index fulfils
\begin{equation}
\asym{\mathbb{A}/\mathbb{A}_{22}} \leq \asym{\mathbb{A}}.
\end{equation}
\end{proposition}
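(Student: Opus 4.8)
The plan is to recast the condition defining the asymmetry index as a positivity statement about the Hermitian part of a scaled complex matrix, and then to exploit the fact that this scaling commutes with forming a Schur complement. Since $\mathbb{A}$ is real, $\mathbb{A}^{t}=\mathbb{A}^{\dagger}$, and the Hermitian matrix appearing in (\ref{Apx Asymmetry Index}) can be rewritten as
\begin{equation}
s\left(\mathbb{A}+\mathbb{A}^{t}\right)+i\left(\mathbb{A}-\mathbb{A}^{t}\right)
=(s+i)\,\mathbb{A}+\bigl((s+i)\,\mathbb{A}\bigr)^{\dagger}.
\end{equation}
Hence $\asym{\mathbb{A}}$ is exactly the smallest $s$ for which the complex matrix $(s+i)\mathbb{A}$ has positive semi-definite Hermitian part (the minimum being attained and non-negative by the basic properties collected in \ref{Apx AI Basic properties}). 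The elementary algebraic remark I would use next is that, for $z\in\mathbb{C}\setminus\{0\}$, the matrix $z\mathbb{A}$ partitioned as in (\ref{Apx Partitioning SC}) has invertible $(2,2)$ block $z\mathbb{A}_{22}$, and the factor $z$ cancels in $z\mathbb{A}_{12}\,(z\mathbb{A}_{22})^{-1}\,z\mathbb{A}_{21}$, so that $(z\mathbb{A})/(z\mathbb{A}_{22})=z\,\bigl(\mathbb{A}/\mathbb{A}_{22}\bigr)$.

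Setting $s\equiv\asym{\mathbb{A}}$ and $\mathbb{M}\equiv(s+i)\mathbb{A}$, the first paragraph gives $\mathbb{M}+\mathbb{M}^{\dagger}\succeq 0$, while the algebraic remark gives $(s+i)\bigl(\mathbb{A}/\mathbb{A}_{22}\bigr)=\mathbb{M}/\mathbb{M}_{22}$, with $\mathbb{M}_{22}=(s+i)\mathbb{A}_{22}$ invertible. The whole statement therefore reduces to the claim that a complex matrix with positive semi-definite Hermitian part has a Schur complement (with respect to an invertible principal block) with the same property: granting this, $\mathbb{M}/\mathbb{M}_{22}=(s+i)(\mathbb{A}/\mathbb{A}_{22})$ has positive semi-definite Hermitian part, so by the rewriting of the first paragraph — now read for $\mathbb{A}/\mathbb{A}_{22}$ — the value $s$ is admissible in the minimization defining $\asym{\mathbb{A}/\mathbb{A}_{22}}$, whence $\asym{\mathbb{A}/\mathbb{A}_{22}}\le s=\asym{\mathbb{A}}$.

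The reduced claim, as well as the positive semi-definiteness of $\mathbb{A}/\mathbb{A}_{22}$, both follow from a single quadratic-form identity. For $\mathbf{x}\in\mathbb{C}^{m-p}$ put $\mathbf{y}\equiv-\mathbb{M}_{22}^{-1}\mathbb{M}_{21}\mathbf{x}$ and $\mathbf{w}\equiv(\mathbf{x}^{t},\mathbf{y}^{t})^{t}\in\mathbb{C}^{m}$; since $\mathbb{M}_{21}\mathbf{x}+\mathbb{M}_{22}\mathbf{y}=\boldsymbol{0}$, the lower block of $\mathbb{M}\mathbf{w}$ vanishes and one reads off $\mathbf{w}^{\dagger}\mathbb{M}\,\mathbf{w}=\mathbf{x}^{\dagger}\mathbb{M}_{11}\mathbf{x}+\mathbf{x}^{\dagger}\mathbb{M}_{12}\mathbf{y}=\mathbf{x}^{\dagger}\bigl(\mathbb{M}/\mathbb{M}_{22}\bigr)\mathbf{x}$. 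Taking real parts yields $\mathbf{x}^{\dagger}\tfrac12\bigl(\mathbb{M}/\mathbb{M}_{22}+(\mathbb{M}/\mathbb{M}_{22})^{\dagger}\bigr)\mathbf{x}=\mathbf{w}^{\dagger}\tfrac12\bigl(\mathbb{M}+\mathbb{M}^{\dagger}\bigr)\mathbf{w}\ge 0$, which is the reduced claim. Running the identical computation over $\mathbb{R}$ with $\mathbb{A}$ in place of $\mathbb{M}$ (so that $\mathbf{y}=-\mathbb{A}_{22}^{-1}\mathbb{A}_{21}\mathbf{x}$ is real) gives $\mathbf{x}^{t}\bigl(\mathbb{A}/\mathbb{A}_{22}\bigr)\mathbf{x}=\mathbf{w}^{t}\mathbb{A}\,\mathbf{w}\ge 0$ for every real $\mathbf{x}$, so $\mathbb{A}/\mathbb{A}_{22}$ is positive semi-definite and $\asym{\mathbb{A}/\mathbb{A}_{22}}$ is well defined, closing the chain above.

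I do not expect a genuine obstacle: every step is an exact identity, and the only inequalities invoked are the positive semi-definiteness of $\mathbb{A}$ and of $\mathbb{M}+\mathbb{M}^{\dagger}$. The point requiring the most care is organizational — one must first secure $\mathbb{A}/\mathbb{A}_{22}\succeq 0$ so that the left-hand side is meaningful, and then verify that the \emph{optimal} $s$ for $\mathbb{A}$ is merely \emph{admissible} (not necessarily optimal) for $\mathbb{A}/\mathbb{A}_{22}$, which is precisely what the Hermitian-part reformulation together with the scaling identity $(z\mathbb{A})/(z\mathbb{A}_{22})=z(\mathbb{A}/\mathbb{A}_{22})$ delivers. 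It is worth noting that the argument never needs the Hermitian block $s(\mathbb{A}_{22}+\mathbb{A}_{22}^{t})+i(\mathbb{A}_{22}-\mathbb{A}_{22}^{t})$ to be invertible; it uses only the invertibility of $\mathbb{M}_{22}=(s+i)\mathbb{A}_{22}$, which is automatic from that of $\mathbb{A}_{22}$.
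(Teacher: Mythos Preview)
Your argument is correct and is essentially the paper's own proof: both hinge on the single quadratic-form identity obtained by substituting $\mathbf{w}=(\mathbf{x},-\mathbb{A}_{22}^{-1}\mathbb{A}_{21}\mathbf{x})$ into $\mathbf{w}^{\dagger}\mathbb{A}\mathbf{w}$ and into the Hermitian combination $s(\mathbb{A}+\mathbb{A}^{t})+i(\mathbb{A}-\mathbb{A}^{t})$. Your extra layer of packaging via $\mathbb{M}=(s+i)\mathbb{A}$ and the scaling identity $(z\mathbb{A})/(z\mathbb{A}_{22})=z(\mathbb{A}/\mathbb{A}_{22})$ is a neat way to see why the \emph{same} vector $\mathbf{y}=-\mathbb{A}_{22}^{-1}\mathbb{A}_{21}\mathbf{x}$ works for both parts (since $\mathbb{M}_{22}^{-1}\mathbb{M}_{21}=\mathbb{A}_{22}^{-1}\mathbb{A}_{21}$), but the underlying computation and the logical flow coincide with the paper's.
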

\begin{proof}
By assumption and by definition \ref{Apx Asymmetry Index}, 
we have for any
$\mathbf{z}\in\mathbb{C}^m$
\begin{equation}
\mathbf{z}^{\dagger} \mathbb{A} \mathbf{z}\geq 0
\qquad {{\rm and}} \qquad
\mathbf{z}^{\dagger}\left(
\asym{\mathbb{A}}(\mathbb{A}+\mathbb{A}^t)
+i(\mathbb{A}-\mathbb{A}^t)\right)\mathbf{z}\geq 0.
\end{equation}
Putting 
\begin{equation}
\mathbf{z} \equiv \left( \! \begin{array}{c} 
\mathbf{z}_p\\ - \mathbb{A}_{22}^{-1}\mathbb{A}_{21}
\mathbf{z}_p \end{array} \! \right) \qquad {{\rm with}} \qquad
\mathbf{z}_p \in \mathbb{C}^p
\end{equation}
yields
\begin{equation}
\mathbf{z}_p^{\dagger} \left( \mathbb{A}/\mathbb{A}_{22} \right)
\mathbf{z}_p \geq 0
\end{equation}
and
\begin{equation}
\mathbf{z}_p^{\dagger} \left( \asym{\mathbb{A}}
(\mathbb{A}/\mathbb{A}_{22}+( \mathbb{A}/\mathbb{A}_{22})^t)
+i(\mathbb{A}/\mathbb{A}_{22}-(\mathbb{A}/\mathbb{A}_{22})^t)
\right)\mathbf{z}_p\geq 0.
\end{equation}
\end{proof}
For the special class of matrices considered in Corollary 
\ref{Apx AI Doubly Substochastic Matrices}, the assertion of 
Proposition \ref{Apx Dominance of Schur Complement} can be even 
strengthened. Before being able to state this stronger result,
we need to prove the following Lemma.

\begin{lemma}\label{Apx AI of the SC of Spcial Matrices}
Let $\bar{\mathbb{T}}\in\mathbb{R}^{m\times m}$ be a doubly 
substochastic matrix and $\mathbb{S}\equiv\id-\bar{\mathbb{T}}$
be partitioned as 
\begin{equation}
\mathbb{S}\equiv\left( \! \begin{array}{cc}
\mathbb{S}_{11} & \mathbb{S}_{12}\\
\mathbb{S}_{21} & \mathbb{S}_{22}
\end{array}\! \right),
\end{equation}
where $\mathbb{S}_{22}\in\mathbb{R}^{p\times p}$ is non-singular,
then there is a doubly substochastic matrix $\bar{\mathbb{T}}_{m-p}
\in\mathbb{R}^{(m-p)\times (m-p)}$, such that 
\begin{equation}
\mathbb{S}/\mathbb{S}_{22} = \id - \bar{\mathbb{T}}_{m-p}.
\end{equation}
\end{lemma}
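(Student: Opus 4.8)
The plan is to compute the Schur complement directly in terms of the blocks of $\bar{\mathbb{T}}$ and simply read off the desired matrix. Writing $\bar{\mathbb{T}}$ in the block form conformal with $\mathbb{S}$, with blocks $\bar{\mathbb{T}}_{11}\in\mathbb{R}^{(m-p)\times(m-p)}$, $\bar{\mathbb{T}}_{12}$, $\bar{\mathbb{T}}_{21}$ and $\bar{\mathbb{T}}_{22}\in\mathbb{R}^{p\times p}$, one has $\mathbb{S}_{11}=\id-\bar{\mathbb{T}}_{11}$, $\mathbb{S}_{12}=-\bar{\mathbb{T}}_{12}$, $\mathbb{S}_{21}=-\bar{\mathbb{T}}_{21}$ and $\mathbb{S}_{22}=\id-\bar{\mathbb{T}}_{22}$, so that $\mathbb{S}/\mathbb{S}_{22}=\mathbb{S}_{11}-\mathbb{S}_{12}\mathbb{S}_{22}^{-1}\mathbb{S}_{21}=\id-(\bar{\mathbb{T}}_{11}+\bar{\mathbb{T}}_{12}(\id-\bar{\mathbb{T}}_{22})^{-1}\bar{\mathbb{T}}_{21})$. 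I would therefore define $\bar{\mathbb{T}}_{m-p}\equiv\bar{\mathbb{T}}_{11}+\bar{\mathbb{T}}_{12}(\id-\bar{\mathbb{T}}_{22})^{-1}\bar{\mathbb{T}}_{21}$ and reduce the claim to showing that this matrix is doubly substochastic: entrywise non-negative with all row and column sums at most $1$.

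The step I expect to be the main obstacle is establishing that $(\id-\bar{\mathbb{T}}_{22})^{-1}$, which exists by hypothesis, is actually \emph{entrywise non-negative}. Here the argument would run as follows: $\bar{\mathbb{T}}_{22}$ is a principal submatrix of the doubly substochastic matrix $\bar{\mathbb{T}}$, hence has non-negative entries and row sums at most $1$, so its spectral radius is at most $1$; being non-negative, $\bar{\mathbb{T}}_{22}$ has its spectral radius as an eigenvalue by the Perron--Frobenius theorem, and the assumed invertibility of $\id-\bar{\mathbb{T}}_{22}$ rules out the eigenvalue $1$; therefore the spectral radius is strictly less than $1$, the Neumann series $\sum_{k\ge0}\bar{\mathbb{T}}_{22}^{\,k}$ converges to $(\id-\bar{\mathbb{T}}_{22})^{-1}$, and the latter is entrywise non-negative. (Equivalently, one may note that $\id-\bar{\mathbb{T}}_{22}$ is a non-singular $M$-matrix and invoke the non-negativity of its inverse.) Non-negativity of $\bar{\mathbb{T}}_{m-p}$ is then immediate, since it is a sum and product of entrywise non-negative matrices.

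It then remains to bound the row and column sums, which is elementary. Let $\mathbf{1}_r$ denote the all-ones vector in $\mathbb{R}^r$. Doubly substochasticity of $\bar{\mathbb{T}}$ gives $\bar{\mathbb{T}}\mathbf{1}_m\le\mathbf{1}_m$ entrywise, i.e. $\bar{\mathbb{T}}_{11}\mathbf{1}_{m-p}+\bar{\mathbb{T}}_{12}\mathbf{1}_p\le\mathbf{1}_{m-p}$ and $\bar{\mathbb{T}}_{21}\mathbf{1}_{m-p}+\bar{\mathbb{T}}_{22}\mathbf{1}_p\le\mathbf{1}_p$. From the second inequality, $\bar{\mathbb{T}}_{21}\mathbf{1}_{m-p}\le(\id-\bar{\mathbb{T}}_{22})\mathbf{1}_p$; applying the entrywise non-negative matrix $(\id-\bar{\mathbb{T}}_{22})^{-1}$ preserves this inequality and gives $(\id-\bar{\mathbb{T}}_{22})^{-1}\bar{\mathbb{T}}_{21}\mathbf{1}_{m-p}\le\mathbf{1}_p$; multiplying by $\bar{\mathbb{T}}_{12}\ge0$ and adding $\bar{\mathbb{T}}_{11}\mathbf{1}_{m-p}$ yields, together with the first inequality, $\bar{\mathbb{T}}_{m-p}\mathbf{1}_{m-p}\le\bar{\mathbb{T}}_{11}\mathbf{1}_{m-p}+\bar{\mathbb{T}}_{12}\mathbf{1}_p\le\mathbf{1}_{m-p}$, so all row sums are at most $1$. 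For the column sums I would observe that $\bar{\mathbb{T}}^t$ is again doubly substochastic, that $(\id-\bar{\mathbb{T}})^t=\id-\bar{\mathbb{T}}^t$, and that the Schur complement commutes with transposition, $(\mathbb{S}/\mathbb{S}_{22})^t=\mathbb{S}^t/\mathbb{S}_{22}^t$; applying the row-sum bound just proved to $\bar{\mathbb{T}}^t$ then bounds the column sums of $\bar{\mathbb{T}}_{m-p}$. This shows $\bar{\mathbb{T}}_{m-p}$ is doubly substochastic and $\mathbb{S}/\mathbb{S}_{22}=\id-\bar{\mathbb{T}}_{m-p}$, completing the argument; the only delicate ingredient is the non-negativity of $(\id-\bar{\mathbb{T}}_{22})^{-1}$, after which everything is block-matrix bookkeeping and entrywise inequalities.
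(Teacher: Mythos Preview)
Your argument is correct and takes a genuinely different route from the paper. The paper proceeds by induction on $p$: for the base case $p=1$ it computes the entries of $\mathbb{S}/\mathbb{S}_{22}$ explicitly, checks the sign pattern (using that a certain $2\times 2$ principal minor of $\id-\bar{\mathbb{T}}$ is non-negative to handle the diagonal entries), verifies the row and column sum bounds by hand, and then invokes the Crabtree--Haynsworth quotient formula $\mathbb{S}/\mathbb{S}_{22}=(\mathbb{S}/\mathbb{W}_{22})/(\mathbb{S}_{22}/\mathbb{W}_{22})$ to reduce the general case to $p=1$. You instead handle all $p$ at once by writing the Schur complement in block form and isolating the single nontrivial point, namely that $(\id-\bar{\mathbb{T}}_{22})^{-1}$ is entrywise non-negative; your Neumann-series/$M$-matrix argument for this is clean and correct, and once it is in place the row- and column-sum bounds follow by elementary monotonicity with the all-ones vector. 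Your approach is shorter and more conceptual, and it dispenses with both the induction and the quotient formula; the paper's approach, by contrast, stays closer to scalar arithmetic and avoids any appeal to spectral radius or Perron--Frobenius, at the cost of a longer and more computational proof.
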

\begin{proof}
We start with the case $p=1$. Let $\bar{T}_{ij}$ be the matrix 
elements of $\bar{\mathbb{T}}$, then the matrix elements of 
$\mathbb{S}/\mathbb{S}_{22}$ are given by 
\begin{equation}
\left(\mathbb{S}/\mathbb{S}_{22}\right)_{kl}\equiv
\delta_{kl}-\bar{T}_{kl} 
- \frac{\bar{T}_{km}\bar{T}_{ml}}{1-\bar{T}_{mm}}
\end{equation}
with $k,l=1,\dots,m-1$. Obviously, we have
\begin{equation}\label{Apx D Sum Rule 1}
\sum_{k=1}^{m-1} \left(\mathbb{S}/\mathbb{S}_{22}\right)_{kl}
= 1- \sum_{k=1}^{m-1}\bar{T}_{kl} - \frac{\bar{T}_{ml}}{1- 
\bar{T}_{mm}} \sum_{k=1}^{m-1} \bar{T}_{km} \leq 1.
\end{equation}
Furthermore, since by assumption
\begin{equation}
\sum_{i=1}^m \bar{T}_{ij}\leq 1
\end{equation}
it follows 
\begin{equation}\label{Apx D Sum Rule 2}
\sum_{k=1}^{m-1} \left(\mathbb{S}/\mathbb{S}_{22}\right)_{kl}
\geq 1- (1- \bar{T}_{ml}) - \frac{\bar{T}_{ml}(1-\bar{T}_{mm})}
{1-\bar{T}_{mm}}=0.
\end{equation}
Analogously, we find 
\begin{equation}\label{Apx D Sum Rule 3}
0 \leq \sum_{l=1}^{m-1} \left(\mathbb{S}/\mathbb{S}_{22}\right)_{
kl} \leq 1.
\end{equation}
Next, we investigate the sign pattern of the $\left(\mathbb{S}/
\mathbb{S}_{22}\right)_{kl}$. First, for $k\neq l$, we have
\begin{equation}\label{Apx D Constraint 1}
\left(\mathbb{S}/\mathbb{S}_{22}\right)_{kl}
=- \bar{T}_{kl}-\frac{\bar{T}_{km}\bar{T}_{ml}}{1-\bar{T}_{mm}} 
\leq 0.
\end{equation}
Second, we rewrite the $\left(\mathbb{S}/\mathbb{S}_{22}\right)_{
kk}$ as 
\begin{equation}
\left(\mathbb{S}/\mathbb{S}_{22}\right)_{kk} = 
1-\bar{T}_{kk}-\frac{\bar{T}_{km}\bar{T}_{mk}}{1-\bar{T}_{mm}}
=\frac{(1-\bar{T}_{kk})(1-\bar{T}_{mm})
-\bar{T}_{km}\bar{T}_{mk}}{1-\bar{T}_{mm}}
\end{equation}
The numerator appearing on the right hand side can be written as 
\begin{equation}
(1-\bar{T}_{kk})(1-\bar{T}_{mm})
-\bar{T}_{km}\bar{T}_{mk}=
{{\rm Det}}\left( \!\begin{array}{cc}
1-\bar{T}_{kk} & - \bar{T}_{km}\\
-\bar{T}_{mk}   & 1- \bar{T}_{mm}
\end{array}\! \right),
\end{equation}
which is a principal minor of $\id-\bar{\mathbb{T}}$. Since,
by Corollary \ref{Apx AI Doubly Substochastic Matrices} is 
positive semi-definite, we end up with
\begin{equation}\label{Apx D Constraint 2}
0 \leq \left(\mathbb{S}/\mathbb{S}_{22}\right)_{kk} \leq 1.
\end{equation}
From the sum rules (\ref{Apx D Sum Rule 1}), 
(\ref{Apx D Sum Rule 2}) and (\ref{Apx D Sum Rule 3}) and the 
constraints (\ref{Apx D Constraint 1}) and 
(\ref{Apx D Constraint 2}), we deduce that $\id -\mathbb{S}/
\mathbb{S}_{22}$ is doubly substochastic and thus we have proven
Lemma \ref{Apx AI of the SC of Spcial Matrices} for $p=1$.
We now continue by induction. To this end, we assume that Lemma
\ref{Apx AI of the SC of Spcial Matrices} is true for $p=q$. For
$p=q+1$ the matrix $\mathbb{S}_{22}\in\mathbb{R}^{(q+1)\times
(q+1)}$ can be partitioned as 
\begin{equation}
\mathbb{S}_{22}\equiv \left(\! \begin{array}{cc}
W_{11} & \mathbf{W}_{12}^t\\
\mathbf{W}_{21}   & \mathbb{W}_{22}
\end{array}\! \right)
\end{equation}
with $\mathbb{W}_{22}\in\mathbb{R}^{q\times q}$, $W_{11}\in\mathbb{R}$
and accordingly $\mathbf{W}_{12}, \mathbf{W}_{21}\in \mathbb{R}^q$.
The Crabtree-Haynsworth quotient formula (see p. 25 in
\cite{Zhang2005}), allows us to rewrite $\mathbb{S}/\mathbb{S}_{
22}$ as 
\begin{equation}\label{Apx AI for SC quotient formula}
\mathbb{S}/\mathbb{S}_{22} = \left(\mathbb{S}/\mathbb{W}_{22}
\right)/\left(\mathbb{S}_{22}/\mathbb{W}_{22}\right).
\end{equation}
A direct calculation shows that $\mathbb{S}_{22}/\mathbb{W}_{22}\in 
\mathbb{R}$ is the lower right diagonal entry of
$\mathbb{S}/\mathbb{W}_{22}$ (see p. 25 in \cite{Zhang2005} 
for details). Furthermore, by the induction hypothesis, there is 
a doubly substochastic matrix $\bar{\mathbb{T}}_{m-q}\in 
\mathbb{R}^{(m-q)\times (m-q)}$, such that 
\begin{equation}
\mathbb{S}/\mathbb{W}_{22} = \id-\bar{\mathbb{T}}_{m-q}.
\end{equation}
Thus, (\ref{Apx AI for SC quotient formula}) reduces to 
the case $p=1$, for which we have already proven Lemma 
\ref{Apx AI of the SC of Spcial Matrices}.
\end{proof}
From Lemma \ref{Apx AI of the SC of Spcial Matrices} and 
Corollary \ref{Apx AI Doubly Substochastic Matrices}, we 
immediately deduce

\begin{corollary}\label{Apx AI SC of Doubls SS Matrices}
Let $\bar{\mathbb{T}}$, $\mathbb{S}$ and 
$\mathbb{S}_{22}$ be as in Lemma \ref{Apx AI of the SC 
of Spcial Matrices}, then
\begin{equation}
\asym{\mathbb{S}/\mathbb{S}_{22}}\leq \cot \left( 
\frac{\pi}{m-p+1}\right).
\end{equation}
\end{corollary}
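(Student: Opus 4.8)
The plan is to combine the two already-established facts in the obvious way. By Lemma~\ref{Apx AI of the SC of Spcial Matrices}, since $\bar{\mathbb{T}}\in\mathbb{R}^{m\times m}$ is doubly substochastic and $\mathbb{S}=\id-\bar{\mathbb{T}}$ is partitioned with non-singular lower-right block $\mathbb{S}_{22}\in\mathbb{R}^{p\times p}$, there exists a doubly substochastic matrix $\bar{\mathbb{T}}_{m-p}\in\mathbb{R}^{(m-p)\times(m-p)}$ such that $\mathbb{S}/\mathbb{S}_{22}=\id-\bar{\mathbb{T}}_{m-p}$. So the Schur complement is itself of the form ``identity minus doubly substochastic'' but in dimension $m-p$.

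Then I would simply invoke Corollary~\ref{Apx AI Doubly Substochastic Matrices}: for any doubly substochastic matrix in $\mathbb{R}^{(m-p)\times(m-p)}$, the matrix $\id$ minus that matrix is positive semi-definite and its asymmetry index is bounded by $\cot(\pi/((m-p)+1))$. Applying this with $\bar{\mathbb{T}}_{m-p}$ in place of $\bar{\mathbb{T}}$ gives
\begin{equation}
\asym{\mathbb{S}/\mathbb{S}_{22}}=\asym{\id-\bar{\mathbb{T}}_{m-p}}\leq\cot\left(\frac{\pi}{(m-p)+1}\right)=\cot\left(\frac{\pi}{m-p+1}\right),
\end{equation}
which is exactly the claimed inequality.

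There is essentially no obstacle here — the corollary is a one-line consequence of stitching the two preceding results together, which is presumably why it is stated as a corollary rather than a theorem. The only point worth a moment's care is to make sure the dimension bookkeeping is consistent: the $m-p$ appearing in Lemma~\ref{Apx AI of the SC of Spcial Matrices} is the size of the Schur complement, and feeding it as the ``$m$'' of Corollary~\ref{Apx AI Doubly Substochastic Matrices} produces the argument $\pi/(m-p+1)$ of the cotangent, matching the statement. Positive semi-definiteness of $\mathbb{S}/\mathbb{S}_{22}$ also follows for free, either from the representation $\id-\bar{\mathbb{T}}_{m-p}$ together with Corollary~\ref{Apx AI Doubly Substochastic Matrices}, or directly from Proposition~\ref{Apx Dominance of Schur Complement}.
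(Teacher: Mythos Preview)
Your proposal is correct and follows exactly the same approach as the paper: the authors introduce the corollary with the sentence ``From Lemma~\ref{Apx AI of the SC of Spcial Matrices} and Corollary~\ref{Apx AI Doubly Substochastic Matrices}, we immediately deduce,'' which is precisely the two-step combination you spell out. Your dimension bookkeeping is fine and matches the statement.
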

\newpage

\section*{References}


\begin{thebibliography}{10}

\bibitem{Dresselhaus2007}
M.~S. Dresselhaus, G.~Chen, M.~Y. Tang, R.~Yang, H.~Lee, D.~Wang, Z.~Ren, J.-P.
  Fleurial, and P.~Gogna.
\newblock {New Directions for Low-Dimensional Thermoelectric Materials}.
\newblock {\em Adv. Mat.}, 19:1043--1053, 2007.

\bibitem{Snyder2008}
G.~J. Snyder and S.~Toberer.
\newblock {Complex thermoelectric materials}.
\newblock {\em Nature Mater.}, 7:105--114, 2008.

\bibitem{Bell2008}
L.~E. Bell.
\newblock {Cooling, heating, generating power, and recovering waste heat with
  thermoelectric systems.}
\newblock {\em Science}, 321:1457--61, 2008.

\bibitem{Vineis2010}
C.~J. Vineis, A.~Shakouri, A.~Majumdar, and M.~G. Kanatzidis.
\newblock {Nanostructured thermoelectrics: big efficiency gains from small
  features.}
\newblock {\em Adv. Mat.}, 22:3970--80, 2010.

\bibitem{Humphrey2005a}
T.~E. Humphrey and H.~Linke.
\newblock {Quantum, cyclic, and particle-exchange heat engines}.
\newblock {\em Physica E: Low-dimensional Systems and Nanostructures},
  29:390--398, 2005.
  
\bibitem{Mahan1996}
G.~D.~Mahan, and J.~O.~Sofo.
\newblock {The best thermoelectric}.
\newblock {\em Proc. Natl. Acad. Sci.}, 93:7436-7439, 1996.

\bibitem{Humphrey2002}
T.~E. Humphrey, R.~Newbury, R.~P. Taylor, and H.~Linke.
\newblock {Reversible Quantum Brownian Heat Engines for Electrons}.
\newblock {\em Phys. Rev. Lett.}, 89:116801, 2002.

\bibitem{Humphrey2005}
T.~E. Humphrey and H.~Linke.
\newblock {Reversible Thermoelectric Nanomaterials}.
\newblock {\em Phys. Rev. Lett.}, 94:096601, 2005.

\bibitem{Benenti2011}
G.~Benenti, K.~Saito, and G.~Casati.
\newblock {Thermodynamic Bounds on Efficiency for Systems with Broken
  Time-Reversal Symmetry}.
\newblock {\em Phys. Rev. Lett.}, 106:230602, 2011.

\bibitem{Landuer1957}
R.~Landauer.
\newblock {Spatial Variation of Currents and Fields Due to Localized Scatterers
  in Metallic Conduction}.
\newblock {\em IBM J. Res. Develop.}, 1:223, 1957.

\bibitem{Sivan1986}
U.~Sivan and Y.~Imry.
\newblock {Multichannel Landauer formula for thermoelectric transport with
  application to thermopower near the mobility edge}.
\newblock {\em Phys. Rev. B}, 33:551--558, 1986.

\bibitem{Butcher1990}
P.~N. Butcher.
\newblock {Thermal and electrical transport formalism for electronic
  microstructures with many terminals}.
\newblock {\em J. Phys. Condens. Matter}, 2:4869--4878, 1990.

\bibitem{Buttiker1988a}
M.~B\"{u}ttiker.
\newblock {Symmetry of electrical conduction}.
\newblock {\em IBM J. Res. Develop.}, 32:317--334, 1988.

\bibitem{Buttiker1988}
M.~B\"{u}ttiker.
\newblock {Coherent and sequential tunneling in series barriers}.
\newblock {\em IBM J. Res. Develop.}, 32:63--75, 1988.

\bibitem{Saito2011}
K.~Saito, G.~Benenti, G.~Casati, and T.~Prosen.
\newblock {Thermopower with broken time-reversal symmetry}.
\newblock {\em Phys. Rev. B}, 84:201306(R), 2011.

\bibitem{Brandner2013}
K.~Brandner, K.~Saito, and U.~Seifert.
\newblock {Strong Bounds on Onsager Coefficients and Efficiency for
  Three-Terminal Thermoelectric Transport in a Magnetic Field}.
\newblock {\em Phys. Rev. Lett.}, 110:070603, 2013.

\bibitem{Balachandran2013}
V.~Balachandran, G.~Benenti, and G.~Casati.
\newblock {Efficiency of three-terminal thermoelectric transport under broken
  time-reversal symmetry}.
\newblock {\em Phys. Rev. B}, 87:165419, 2013.

\bibitem{Callen1985}
H.~B. Callen.
\newblock {\em {Thermodynamics and an Introduction to Thermostatics}}.
\newblock John Wiley \& Sons, New York, 2nd edition, 1985.

\bibitem{Crouzeix2003}
J.-P. Crouzeix and C.~Gutan.
\newblock {A Measure Of Asymmetry For Positive Semidefinite Matrices}.
\newblock {\em Optimization}, 52:251--262, 2003.

\bibitem{Curzon1975}
F.~L. Curzon and B.~Ahlborn.
\newblock {Efficiency of a Carnot engine at maximum power output}.
\newblock {\em Am. J. Phys.}, 43:22, 1975.

\bibitem{Esposito2009}
M.~Esposito, K.~Lindenberg, and C.~{Van den Broeck}.
\newblock {Universality of Efficiency at Maximum Power}.
\newblock {\em Phys. Rev. Lett.}, 102:130602, 2009.

\bibitem{Seifert2012}
U.~Seifert.
\newblock {Stochastic thermodynamics, fluctuation theorems and molecular
  machines.}
\newblock {\em Rep. Prog. Phys.}, 75:126001, 2012.

\bibitem{Palao2001}
J.~Palao, R.~Kosloff, and J.~Gordon.
\newblock {Quantum thermodynamic cooling cycle}.
\newblock {\em Phys. Rev. E}, 64:056130, 2001.

\bibitem{Skrzypczyk2011}
P.~Skrzypczyk, N.~Brunner, N.~Linden, and S.~Popescu.
\newblock {The smallest refrigerators can reach maximal efficiency}.
\newblock {\em J. Phys. A: Math. Theor.}, 44:492002, 2011.

\bibitem{Sanchez2011}
D.~S\'{a}nchez and L.~Serra.
\newblock {Thermoelectric transport of mesoscopic conductors coupled to voltage
  and thermal probes}.
\newblock {\em Phys. Rev. B}, 84:201307(R), 2011.

\bibitem{Horvat2012}
M.~Horvat, T.~Prosen, G.~Benenti, and G.~Casati.
\newblock {Railway switch transport model}.
\newblock {\em Phys. Rev. E}, 86:052102, 2012.

\bibitem{Horn1985}
R.~A. Horn and C.~R. Johnson.
\newblock {\em {Matrix Analysis}}.
\newblock Cambridge University Press, second edition, 2013.

\bibitem{Horn1991}
R.~A. Horn and C.~R. Johnson.
\newblock {\em {Topics in Matrix Analysis}}.
\newblock Cambridge University Press, first edition, 1991.

\bibitem{Zhang2005}
F.~Zhang.
\newblock {\em {The Schur Complement and its Applications}}.
\newblock Springer, first edition, 2005.

\end{thebibliography}
\end{document}